\def\tablenotes{\bgroup\parfillskip=0pt plus 1fil
\leftskip=0pt\relax \rightskip=0pt
\vskip2pt\footnotesize}
\def\endtablenotes{\vskip1pt\egroup}
\newtheorem{theorem}{Theorem}[section]
\newtheorem{proposition}[theorem]{Proposition}
\newtheorem{lemma}[theorem]{Lemma}
\newtheorem{remark}[theorem]{Remark}
\renewcommand{\epsilon}{\varepsilon}
\newcommand{\TimeDeriv}{\frac{\textrm{d}}{\textrm{dt}}}
\renewcommand{\epsilon}{\varepsilon}
\renewcommand{\leq}{\leqslant}
\renewcommand{\geq}{\geqslant}
\renewcommand{\d}{\mathrm{d}}
\newcommand{\eq}[1]{\eqref{#1}}
\renewcommand{\phi}{\varphi}
\renewcommand{\epsilon}{\varepsilon}
\numberwithin{equation}{section}
\renewcommand{\@biblabel}[1]{#1\hfill \hspace{-0.2cm}}
\begin{document}
\title{A Recipe for State Dependent Distributed Delay Differential Equations }
%\author{Tyler Cassidy \and Morgan Craig \and Antony R. Humphries}
\author{%
  Tyler Cassidy\affil{1},
  Morgan Craig\affil{2,3}
  and
  Antony R. Humphries\affil{1,3}% \corrauth
}
% \shortauthors{the Author(s)}
%\address{%
%  \addr{\affilnum{1}}{ Department of Mathematics and Statistics, McGill University,
% 805 Sherbrooke Street West, Montr\'{e}al, H3A 0B9, Canada.}
%  \addr{\affilnum{2}}{ D\'{e}partement de math\'{e}matiques et de statistique, Universit\'{e} de Montr\'{e}al, 2920 chemin de la Tour, Montr\'{e}al, H3T 1J4, Canada}
%    \addr{\affilnum{3}}{ Department of Physiology, McGill University,
%  3655 Promenade Sir-William-Osler, Montr\'{e}al, H3G 1Y6, Canada.} }

% corresponding author
%\corraddr{tyler.cassidy@mail.mcgill.ca; }

\author{  { \sc Tyler Cassidy$^1$,}
 { \sc Morgan Craig$^{2,3}$}
  and
  {\sc Antony R. Humphries$^{1,3}$} \\[2pt]
$^1$ Department of Mathematics and Statistics, McGill University, \\
 805 Sherbrooke Street West, Montreal, H3A 0B9, Canada.\\[6pt]
 $^2$ D\'{e}partement de math\'{e}matiques et de statistique, Universit\'{e} de Montr\'{e}al, \\
  2920 chemin de la Tour, Montr\'{e}al, H3T 1J4, Canada \\[6pt]
 $^3$ Department of Physiology, McGill University, \\
  3655 Promenade Sir-William-Osler, Montreal, H3G 1Y6, Canada.\\[6pt]
}
\maketitle

\begin{abstract}
{We use the McKendrick equation with variable ageing rate and randomly distributed maturation time to derive a state dependent distributed delay differential equation. We show that the resulting delay differential equation preserves non-negativity of initial conditions and we characterise local stability of equilibria. By specifying the distribution of maturation age, we recover state dependent discrete, uniform and gamma distributed delay differential equations. We show how to reduce the uniform case to a system of state dependent discrete delay equations and the gamma distributed case to a system of ordinary differential equations. To illustrate the benefits of these reductions, we convert previously published transit compartment models into equivalent distributed delay differential equations. }
\end{abstract}
%\keywords{
%\textbf{(5 to 10 keywords)}
%}

\section{Introduction}\label{Sec:Introduction}
Age structured population models have been used extensively in mathematical biology throughout the past 90 years \citep{Mckendrick1925,Trucco1965} (see \citep{Metz1986} for a review). These age structured models describe the progression of individuals through an ageing process by using partial differential equations (PDEs), that can, in certain cases, be reduced to a delay differential equation (DDE)  \citep{Metz1986,Smith1993,Craig2016}.  When individuals exit the ageing process in a deterministic manner upon reaching a threshold maturation age, the age structured model is typically reduced to a discrete DDE.

In many populations, the speed at which an individual matures is often only weakly coupled to chronological time and is dynamically controlled by the availability of resources. Consequently, when considering the age of an individual in a population, it is the biological age -- and not the chronological age-- that is of interest. It is possible to allow for this dynamic accumulation of biological age by including a variable ageing rate in an age structured PDE model. PDE models with variable ageing rates and threshold maturation rates can be reduced to state dependent discrete DDEs. State dependent delays considerably complicate the study of these models, but incorporate external control of the maturation process and increase physiological relevance.

However, imposing a threshold maturation age does not account for population heterogeneity and implicitly assumes a homogeneous maturation age. Given the importance of individual differences in a population, it is important that intraspecies heterogeneity is included in mathematical models. In light of these observations, we develop a technique to explicitly incorporate maturation age heterogeneity and external control of age accumulation by providing a framework for state dependent distributed DDEs. State dependent distributed DDEs account for a measure of population heterogeneity not present in discrete DDE models while retaining external control of the ageing process. Therefore, distributed DDEs offer a physiologically more realistic manner to model ageing processes in populations \citep{Cassidy2018}.

To derive a state dependent distributed DDE, we consider a general age structured model with a variable ageing rate. We eschew a deterministic maturation process (which would lead to state dependent discrete DDEs), and instead utilise a randomly distributed maturation age $A$.
This random variable defines a density function $K_A(t)$ through
\begin{equation}
K_A(t) = \lim \limits_{\Delta t \to 0} \frac{\mathbb{P}\left[t\leq A \leq t+\Delta t \right]}{\Delta t},
\label{Eq:DensityDefinition}
\end{equation}
which satisfies
\begin{equation*}
\int_0^{\infty} K_A(t) \d t = 1 \quad \textrm{and} \quad K_A(t) \geq 0 \quad \forall t \geq 0.
\end{equation*}
As shown by \citet{Craig2016,Otto2017} and \citet{Bernard2016}, replacing existing discrete delays with state dependent delays requires careful attention to how solutions pass across the maturation boundary. \citet{Craig2016} derived a ``correction'' factor to ensure that individuals are not spuriously created or destroyed during maturation. Our work generalises the correction factor derived by \citet{Craig2016} for state dependent discrete DDEs to any state dependent DDE. Specifically, our derivation does not rely on a smoothness argument, but arises naturally from the age structured PDE after a careful derivation of the maturation rate.

We show how the age structured PDE can be reduced to a state dependent distributed DDE. For specific densities $K_A(t)$, we show equivalence between the state dependent distributed DDE and state-dependent discrete DDEs with one or two delays or a finite dimensional systems of ordinary differential equations (ODEs). These equivalences arise from the explicit consideration of the ageing process modelled by the distributed DDEs. By applying the linear chain technique to the age variable, instead of the time variable, we are able to establish the desired equivalences. As there is not an available all purpose numerical method capable of solving distributed DDEs, these equivalences allow for the model to be analysed as a DDE and simulated using the highly efficient established techniques for discrete DDEs or ODEs.  To illustrate the benefits of the techniques developed here, we consider two previously published models of hematopoietic cell production and show how using distributed DDEs can simplify the analysis of the resulting model.

The structure of the article is as follows. In Section~\ref{Sec:AgeStructuredPDEReduction}, we study the McKendrick equation for a generic population with a variable ageing rate and random maturation time. By solving the PDE using the method of characteristics, we derive a state-dependent distributed DDE for the general density $K_A(t)$ in Theorem~\ref{Theorem:GeneralDDE}. We discuss the naturally arising ``correction'' factor in Section~\ref{Sec:CorrectionFactor}. To illustrate the benefits of reducing age structured models to DDEs, we show that the resulting DDE preserves non-negativity of initial conditions and perform stability analysis to study the local stability of equilibria in Section~\ref{Sec:AnalysisofDDE}. By specifying $K_A(t)$ to be the degenerate distribution, we recover a state-dependent discrete DDE in Section~\ref{Sec:StateDependentDDE}. Next, we consider uniform distributions and the equivalent two delay DDE in Section~\ref{Sec:CompactSupportDistribution}. In Section~\ref{Sec:GammaDistributedDDE}, we study a gamma distributed DDE. Through a generalization of the linear chain technique to include a variable transit rate, we show how this gamma distributed DDE can be reduced to a finite dimensional system of transit compartment ODEs in Section~\ref{Sec:FiniteDimensionalRepresentation}.  In Section~\ref{Sec:Examples}, we formalize the link between variable transit rate compartment models and state dependent delayed processes by converting two previously published transit compartment models to the corresponding distributed DDEs. Finally, we summarize our results with a brief conclusion.

\section{From McKendrick Type Equations to State Dependent Delays}\label{Sec:AgeStructuredPDEReduction}

Consider a population divided into immature and  mature compartments in which only mature individuals reproduce. Let $n(t,a)$ denote the number of immature individuals  at time $t$ with age $a$ and $x(t)$ denote the number of mature members of the population at time $t$. The purpose of this section is to establish a state dependent distributed DDE model for $x(t)$.

We begin with an age structured PDE for the immature population, $n(t,a)$. Immature individuals progress through maturation with a variable ageing rate $V_a(t)$, where $V_a(t)$ satisfies
\begin{equation*}
0< V_a^{min} \leq V_a(t) \leq V_a^{max} < \infty.
\end{equation*}
Following \citet{Mckendrick1925}, the PDE describing $n(t,a)$ is
\begin{equation}
\left.
\begin{aligned}
\partial_t n(t,a) + V_a(t)\partial_a n(t,a) & = -\left[\mu(x(t))+h(a)\right]n(t,a)\\
V_a(t) n(t,0)  =  \beta x(t) \quad t \geq t_0;  & \quad n(t_0,a) = f(a) \geq 0 \quad \forall a \in (0, \infty ).
\end{aligned}
\right \}
\label{Eq:McKendrickAgePDE}
\end{equation}
The boundary condition $V_a(t) n(t,0) = \beta x(t)$ that we impose links the creation of immature individuals $n(t,0)$ with the birth rate $\beta x(t)$. The presence of $V_a(t)$ in this boundary term can be understood from the conveyor belt analogy \citep{Mahaffy1998,Bernard2016}.
%, and was not included in McKendrick's initial model.
In the following, we assume $\beta >0$. 
The initial conditions $n(t_0,a) = f(a) \geq 0,$ describes immature individuals with non-zero age at time $t_0$.

The death rate of immature individuals is given by $\mu(x(t))$ while transition from the immature state to the mature state is modelled by $ h(a)$. It is important to note that the transition rate is a function of the age of individuals at time $t$. Since we expect a link between time and physiological age, we will write $a(t)$. Later, we  formalize the weakly coupled relationship between biological and chronological age and justify this notation by finding the characteristics of \eqref{Eq:McKendrickAgePDE}.

We begin by deriving the transition rate from immaturity to maturity, $h(a(t))$. As mentioned, we assume that the age at which an individual matures is a non-negative random variable $A$ with density function $K_A(t)$. The transition rate, $h(a(t))$, is the instantaneous change in probability that an individual matures at age $a(t+\Delta t)$, given that the individual has not matured at age $a(t)$. Formally, using the definition of conditional probability,
\begin{equation*}
h(a(t)) = \lim \limits_{\Delta t \to 0} \frac{\mathbb{P}\left[a(t)\leq A \leq a(t+\Delta t)| A \geq a(t) \right]}{\Delta t} = \lim \limits_{\Delta t \to 0} \frac{\mathbb{P}\left[a(t)\leq A \leq a(t+\Delta t) \right]}{ \mathbb{P}[A \geq a(t)] \Delta t}.
\end{equation*}
Multiplying by unity gives
\begin{equation*}
h(a(t)) = \frac{1}{  \mathbb{P}[A \geq a(t)] }\lim \limits_{\Delta t \to 0} \frac{\mathbb{P}\left[a(t)\leq A \leq a(t+\Delta t) \right]}{ (a(t+\Delta t)-a(t)) } \frac{a(t+\Delta t)-a(t)}{\Delta t}.
\end{equation*}
By \eqref{Eq:DensityDefinition} and the derivative of $a(t)$, we obtain
\begin{equation}
h(a(t)) = \frac{K_A(a(t))}{1-\int_{0}^{a(t)} K_A(\sigma)\d \sigma} \TimeDeriv a(t).
\label{Eq:HazardRateDefinition}
\end{equation}
The transition (or maturation) rate, $h(a(t))$, is known as the hazard rate of the random variable $A$ and has applications in modelling failure rates \citep{Cox1972,Kaplan1958}. \citet{Metz1986} derived the identical expression for $h(a(t))$ without considering the conditional maturation probability.

It is possible that immature individuals create multiple mature individuals upon transitioning to the mature compartment (i.e mitosis), so we model the influx rate into the mature compartment as a function
\begin{equation*}
F\left( x(t),\int_{0}^{\infty} h(s)n(t,s)\d s \right),
\end{equation*}
where the integral term
\begin{equation}
\int_{0}^{\infty} h(s)n(t,s)\d s
\label{Eq:NumberOfIndividualsMaturing}
\end{equation}
is the number of immature individuals that reach maturity at time $t$. If mature individuals are cleared at a population dependent rate $\gamma(x(t))$, then the mature population satisfies
\begin{equation}
\left.
\begin{aligned}
\TimeDeriv x(t) & = F\left( x(t),\int_{0}^{\infty} h(s)n(t,s)\d s \right) - \gamma(x(t)) x(t) \\
x(0) & = x_0. \\
\end{aligned}
\right \}
\label{Eq:MaturePopulationDE}
\end{equation}

We are now able to establish equivalence between the system of equations describing the populations $x(t)$ and $n(t,a)$ and a distributed DDE. To do this, we partially solve the PDE~\eqref{Eq:McKendrickAgePDE} using the method of characteristics.

\begin{theorem}[State-Dependent Distributed DDE]\label{Theorem:GeneralDDE}
Let the immature population $n(t,a)$ satisfy the McKendrick age structured PDE~\eqref{Eq:McKendrickAgePDE} with the distribution dependent transition rate $h(a(t))$ \eqref{Eq:HazardRateDefinition}. Assume that the mature population $x(t)$ is given by \eqref{Eq:MaturePopulationDE}.

Then, the mature population $x(t)$ satisfies the initial value problem (IVP)
\begin{equation}
\begin{aligned}
\TimeDeriv x(t) & =  F\left( x(t),  \int_{0}^{\infty}  \beta x(t-\phi) \frac{V_a(t)}{V_a(t-\phi)}\exp \left[ - \int_{t-\phi}^t \mu(x(s)) \d s\right] K_A\left( \int_{t-\phi}^t V_a(s)\d s \right) \d \phi  \right) \\
& \qquad {} - \gamma(x(t)) x(t)
\end{aligned}
\label{Eq:MatureDDE}
\end{equation}
with initial data
\begin{equation*}
x(s) = \rho(s)  \quad \forall s \in (-\infty,t_0].
\end{equation*}
\end{theorem}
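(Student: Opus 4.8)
The plan is to partially solve the transport PDE~\eqref{Eq:McKendrickAgePDE} along its characteristics, substitute the resulting maturation flux~\eqref{Eq:NumberOfIndividualsMaturing} into the mature population equation~\eqref{Eq:MaturePopulationDE}, and read off~\eqref{Eq:MatureDDE}. The characteristics solve $\TimeDeriv a = V_a(t)$, so the characteristic crossing the boundary $a=0$ at entry time $\tau$ carries age $a(t) = \int_\tau^t V_a(s)\,\d s$ at time $t$; writing $\phi = t-\tau$ for the elapsed chronological delay gives $a = \int_{t-\phi}^t V_a(s)\,\d s$, which is exactly the argument of $K_A$ in~\eqref{Eq:MatureDDE}. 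First I would check that for $t>t_0$ every point $(t,a)$ lies on a unique characteristic, and split these into those emanating from the boundary ($\tau\in[t_0,t]$) and those emanating from the initial profile $n(t_0,\cdot)=f$. To collapse everything into a single integral over $\phi\in(0,\infty)$ with no residual $f$ term, I would use that the history $x(s)=\rho(s)$ is prescribed on all of $(-\infty,t_0]$, so that $f$ is consistent with boundary influx from the infinite past and every characteristic traces back to an entry at age $0$.

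Along a characteristic the PDE reduces to the linear ODE $\TimeDeriv n = -[\mu(x(t))+h(a(t))]n$. Integrating and imposing the boundary condition $V_a(\tau)n(\tau,0)=\beta x(\tau)$ gives $n(\tau,0)=\beta x(\tau)/V_a(\tau)$ and
\begin{equation*}
n(t,a(t)) = \frac{\beta x(\tau)}{V_a(\tau)}\exp\!\left[-\int_\tau^t \mu(x(s))\,\d s\right]\exp\!\left[-\int_\tau^t h(a(s))\,\d s\right].
\end{equation*}
The key simplification is that the hazard integral telescopes. Substituting~\eqref{Eq:HazardRateDefinition}, changing variables $u=a(s)$ so that $\d u = V_a(s)\,\d s$, and writing $S(a) = 1-\int_0^a K_A(\sigma)\,\d\sigma$ for the survival function, one finds $\int_\tau^t h(a(s))\,\d s = \int_0^{a(t)} K_A(u)/S(u)\,\d u = -\ln S(a(t))$, since $S'=-K_A$ and $S(0)=1$. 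Hence $\exp[-\int_\tau^t h\,\d s]=S(a(t))$, and the immature density collapses to $n(t,a) = \frac{\beta x(t-\phi)}{V_a(t-\phi)}\exp[-\int_{t-\phi}^t\mu(x(s))\,\d s]\,S(a)$.

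It then remains to evaluate the flux~\eqref{Eq:NumberOfIndividualsMaturing}. Substituting the above and using $h(s)\,S(s) = K_A(s)\,V_a(t)$ — so that the factor $S$ cancels and exactly one copy of $K_A$ survives — reduces $\int_0^\infty h(s)n(t,s)\,\d s$ to an integral of $\beta x(t-\phi)\,\exp[-\int_{t-\phi}^t\mu]\,K_A(\int_{t-\phi}^t V_a)$ weighted by the ageing rates. The delicate step, and the one I expect to be the main obstacle, is the passage between the age coordinate $a$ and the delay coordinate $\phi$: the boundary normalisation contributes the factor $1/V_a(t-\phi)$, the current ageing rate contributes $V_a(t)$, and their combination assembles the ratio $V_a(t)/V_a(t-\phi)$ appearing in~\eqref{Eq:MatureDDE}. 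This ratio is precisely the state-dependent ``correction factor'' of \citet{Craig2016}, and getting its bookkeeping right — keeping careful track of the Jacobian relating $\d a$ and $\d\phi$ and of the direction in which characteristics are traversed — is exactly where the argument must proceed directly from the PDE rather than through a smoothness or conservation shortcut. With the flux in hand, inserting it into~\eqref{Eq:MaturePopulationDE} yields~\eqref{Eq:MatureDDE}, and the prescribed history $x(s)=\rho(s)$ on $(-\infty,t_0]$ supplies the initial data for the IVP.
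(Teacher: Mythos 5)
Your first two steps coincide with the paper's proof: the same characteristic equations, the same collapse of the hazard integral into the survival function $S(a)=1-\int_0^a K_A(\sigma)\,\d\sigma$ (the paper integrates the separable ODE directly; your telescoping of $\int h$ is the identical computation), and the same device of prescribing the history on $(-\infty,t_0]$ so that every characteristic can be traced back to an entry at age zero. Up to the formula $n(t,a_t(\phi)) = \frac{\beta x(t-\phi)}{V_a(t-\phi)}\exp[-\int_{t-\phi}^t\mu(x(s))\,\d s]\,S(a_t(\phi))$ your outline is sound.

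The gap is in the step you yourself flag as the main obstacle, and which you assert rather than carry out. If one evaluates the age integral \eqref{Eq:NumberOfIndividualsMaturing} by your proposed substitution $a = a_t(\phi)=\int_{t-\phi}^t V_a(s)\,\d s$, the Jacobian is $\d a = V_a(t-\phi)\,\d\phi$, and then
\begin{align*}
\int_0^\infty h(a)\,n(t,a)\,\d a
&= \int_0^\infty \frac{K_A(a_t(\phi))}{S(a_t(\phi))}V_a(t)\cdot\frac{\beta x(t-\phi)}{V_a(t-\phi)}\,e^{-\int_{t-\phi}^t \mu(x(s))\,\d s}\,S(a_t(\phi))\cdot V_a(t-\phi)\,\d\phi\\
&= V_a(t)\int_0^\infty K_A\left(\int_{t-\phi}^t V_a(s)\,\d s\right)\beta x(t-\phi)\,e^{-\int_{t-\phi}^t \mu(x(s))\,\d s}\,\d\phi .
\end{align*}
The factor $1/V_a(t-\phi)$ coming from the boundary normalisation is cancelled exactly by the Jacobian, so the ratio $V_a(t)/V_a(t-\phi)$ does \emph{not} assemble: the integrand carries $V_a(t)$ alone, which is not the integrand of \eqref{Eq:MatureDDE}. (The two expressions coincide only for the degenerate density of Section~\ref{Sec:StateDependentDDE}, where the composition rule $\delta(a_t(\phi)-\mathcal{T})=\delta(\phi-\tau(t))/V_a(t-\tau(t))$ reinstates the missing factor; for an absolutely continuous $K_A$ they genuinely differ.) So ``keeping careful track of the Jacobian,'' done literally, leads away from the claimed formula rather than to it.

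The paper's proof parts ways with yours at precisely this point: it never changes variables in the age integral. Instead it writes the maturation rate from the outset as $\int_0^\infty h(a_t(\phi))\,n(t,a_t(\phi))\,\d\phi$, a sum over birth cohorts indexed by the elapsed time $\phi$ with measure $\d\phi$, and then \eqref{Eq:MatureDDE} is immediate because $h$ supplies $V_a(t)$ while the boundary condition supplies $1/V_a(t-\phi)$. To close your argument you would need to justify this $\d\phi$-parametrisation of the maturing flux -- that is, reconcile it with the age-integral definition \eqref{Eq:NumberOfIndividualsMaturing} -- rather than perform the change of variables you describe; as written, the computation you outline contradicts the equality you need.
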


\begin{proof}
The characteristics of equation~\eqref{Eq:McKendrickAgePDE} satisfy
\begin{equation}
\frac{\d }{\d \phi} t(\phi) = 1, \quad \textrm{and} \quad  \frac{\d }{\d t}  a(t) = V_a(t),
\label{Eq:CharacteristicLinesEq}
\end{equation}
and hence are given by
\begin{equation*}
t = \phi+ T_0 \quad \textrm{and} \quad a(t) = \int_{T_0}^{t} V_a(x)\d x+a_0.
\end{equation*}
Along the characteristics, the age structured PDE~\eqref{Eq:McKendrickAgePDE} becomes
\begin{equation}
\frac{\d}{\d t} n(t,a(t)) = -\left[ \mu(x(t)+ \frac{K_A(a(t))}{1-\int_{0}^{a(t)} K_A(\sigma)\d \sigma} V_a(t) \right] n(t,a(t)).
\label{Eq:ODEOnCharacteristic}
\end{equation}
Equation~\eqref{Eq:ODEOnCharacteristic} is a separable differential equation with solution
\begin{equation*}
n(t,a(t)) = n(T_0,a_0)\exp \left[ - \int_{T_0}^t \mu(x(s)) \d s\right] \left(1-\int_{0}^{a(t)} K_A(\sigma)\d \sigma \right).
\end{equation*}
If $a_0 = 0$, we use the boundary condition of \eqref{Eq:McKendrickAgePDE} to find
\begin{equation}
n(t,a(t)) = \frac{\beta x(T_0)}{V_a(T_0)} \exp \left[ - \int_{T_0}^t \mu(x(s)) \d s\right] \left(1-\int_{0}^{a(t)} K_A(\sigma)\d \sigma \right),
\label{Eq:TotalCellsAgeingProcess}
\end{equation}
while, if $a_0 >0$, the initial condition of \eqref{Eq:McKendrickAgePDE} gives
\begin{equation*}
n(t,a(t)) = f(a_0)\exp \left[ - \int_{t_0}^{t} \mu(x(s)) \d s\right] \left(1-\int_{0}^{a(t)} K_A(\sigma)\d \sigma \right).
\end{equation*}
To establish an equivalence between the PDE \eqref{Eq:McKendrickAgePDE}
and the distributed DDE \eq{Eq:MatureDDE}, it is necessary to define suitable initial data
$x(s) = \rho(s)$ for $s<t_0$ for the DDE. To do this, it is natural to assume that an an immature individual with positive age $a >0$ at time $t_0$ was  born at sometime $s<t_0$. Since the PDE \eqref{Eq:McKendrickAgePDE} is not defined for $s < t_0$, we are free to prescribe fixed values for $V_a(s) = V_a^*$ and $\mu(x(s)) = \mu^*$ for $ s < t_0$. Then, imposing that individuals born at time $s < t_0$ evolved according to the McKendrick Equation, we have $a = V_a^*(t_0-s)$, or $s = t_0-a/V_a^*$. Hence, the initial condition $f(a)$ defines the history function $\rho$ through
\begin{equation}
f(a) = \frac{\beta}{V_a^*} \rho (t_0-a/V_a^*)\exp\left[ \int_{t_0-a/V_a^*}^{t_0} - \mu^*\d s\right].
\label{Eq:HistoryFunctionDefinition}
\end{equation}
Therefore defining $x(s) = \rho(s)$, for $s < t_0$, this way, the solution \eqref{Eq:TotalCellsAgeingProcess} applies. 

Now, we finalize the link between the age structured PDE and the distributed DDE by following the characteristic curves until they intersect with the $a=0$ axis. Along the characteristic curves, at time $t$, individuals born at time $T_0= t-\phi $ have age
\begin{equation*}
a_{t}(\phi) = \int_{T_0}^t V_a(x)\d x = \int_{t-\phi}^t V_a(x)\d x
\end{equation*} 
for $\phi >0.$ So we have
\begin{equation*}
n(t,a_{t}(\phi) ) = \frac{\beta x(t-\phi)}{V_a(t-\phi)} \exp \left[ - \int_{t-\phi}^t \mu(x(s)) \d s\right] \left(1-\int_{0}^{a_{t}(\phi) } K_A(\sigma)\d \sigma \right)
\end{equation*}
At time $t$, the rate at which individuals mature  is
\begin{align}\label{Eq:MaturationInput}
\hspace{-.4cm} \int_{0}^{\infty} h(a_{t}(\phi) )n(t,a_{t}(\phi) )\d \phi &  =   \int_{0}^{\infty}  K_A(a_{t}(\phi) ) \beta x(t-\phi) \frac{V_a(t)}{V_a(t-\phi)} \exp \left[ - \int_{t-\phi}^t \mu(x(s)) \d s\right] \d \phi.
\end{align}
By defining, for any density $K_A(t)$,
\begin{equation}
A_K(x(t)) := \int_{0}^{\infty} K_A(a(\phi)) \frac{ \beta x(t-\phi)  }{V_a(t-\phi)}  \exp \left[ - \int_{t-\phi}^t \mu(x(s)) \d s\right] \d \phi,
\label{Eq:AKDefinition}
\end{equation}
we have
\begin{equation*}
\hspace{-.2cm} \int_{0}^{\infty} h\left( a_{\phi}(t) \right) n\left( t,a_{\phi}(t)\right) \d \phi  =  V_a(t) A_K(x(t)).
\end{equation*}
Consequently, using \eqref{Eq:AKDefinition} and defining the history $\rho(s)$ according to \eqref{Eq:HistoryFunctionDefinition}, we have established the equivalence between the system of \eqref{Eq:McKendrickAgePDE} and \eqref{Eq:MaturePopulationDE} with the distributed DDE \eqref{Eq:MatureDDE}.
\end{proof}

\subsection{Accounting for the Random Maturation Threshold}\label{Sec:CorrectionFactor}

Further inspection of equation~\eqref{Eq:MaturationInput} reveals a ratio of ageing speeds $V_a(t)/V_a(t-\phi)$ in the integral term
\begin{equation*}
 \int_{0}^{\infty} h(a(\phi))n(t,a(\phi))\d \phi = \int_{0}^{\infty}  \beta x(t-\phi)  \frac{V_a(t)}{V_a(t-\phi)}\exp \left[ - \int_{t-\phi}^t \mu(x(s)) \d s\right] K_A(\sigma)\d \phi.
\end{equation*}
The ratio of ageing velocities at the entrance and exit of the ageing process acts as a ``correction factor''.  As shown by \citet{Bernard2016} and \citet{Craig2016}, models without the correction factor allow for spurious creation of individuals during maturation and some state-dependent DDEs have missed this important correction factor. Solutions of models without this correction factor do not necessarily preserve nonnegativity of initial data \citep{Bernard2016}.

\citet{Craig2016} derived the correction factor by carefully accounting for the number of cells crossing the maturation threshold in a discrete state-dependent DDE. In discrete DDEs, individuals mature following a deterministic process after accruing a specific threshold age, so the maturation boundary is well-defined. The derivation of the correction factor was based on the smoothness of the solution crossing the fixed maturation boundary. However, the idea of a fixed maturation boundary does not extend to random maturation ages. Consequently, the derivation of the correction factor by \citet{Craig2016} does not generalise to generic distributed DDEs.

Our derivation of the state-dependent distributed DDE produces the same correction factor through the instantaneous maturation probability, $h(a(t))$. The derivation of $h(a(t))$ in equation~\eqref{Eq:HazardRateDefinition} produces the term $V_a(t)$ by accounting for the change of maturation probability due to the variable accumulation of age at time $t$. For the degenerate distribution, as shown in Section~\ref{Sec:StateDependentDDE}, we obtain precisely the same ratio as \citet{Craig2016}.

%Moreover, for the uniform distribution, the correction factor appears for each delay in equation~\eqref{Eq:TwoDelayDiscreteDDE} without separately considering the $\tau_{min}$ and $\tau_{max}$ boundaries.

\section{Properties of State Dependent Delay Differential Equations}\label{Sec:AnalysisofDDE}

Replacing an age structured PDE by a DDE eliminates the need to explicitly model the ageing populations, which can be difficult to measure experimentally. DDEs offer a natural framework that explicitly incorporates delays and identifies the relationship between the current and past states. This can facilitate communication between mathematical biologists and biologists and physiologists. In particular, the explicit presence of the delay term allows for simple calculation of mean delay time that is important for translatability. As shown by \citet{deSouza2017}, models of delayed processes without DDEs do not always accurately calculate the mean delay time.  However, DDEs typically define infinite dimensional semi-dynamical systems, which can introduce mathematical difficulties.

As we have seen in Theorem~\ref{Theorem:GeneralDDE}, partially solving an age structured PDE may lead to a DDE. As such, analysing these partially solved systems can be simpler than studying the corresponding PDE. As an example, we analyse the state-dependent distributed DDE in equation~\eqref{Eq:MatureDDE}. Define
\begin{equation*}
\bar{x}(t) = V_a(t) A_K(t) = V_a(t) \int_{0}^{\infty} K_A(a(\phi)) \frac{ \beta x(t-\phi)  }{V_a(t-\phi)}  \exp \left[ - \int_{t-\phi}^t \mu(x(s)) \d s\right] \d \phi,
\end{equation*}
and consider the IVP
\begin{equation}
\left.
\begin{array}{lll}
\TimeDeriv x(t) & = & F\left[ x(t),\bar{x}(t)\right] - \gamma(x(t)) x(t) \quad t > t_0 \\
x(s) & = & \rho(s) \quad  s \in (-\infty,t_0], \\
\end{array}
\right \}
\label{Eq:DDEIVP}
\end{equation}
where $F(x,y) \in\mathcal{C}^1(\mathbb{R}^2,\mathbb{R})$  and $\gamma(x(t)) \in \mathcal{C}^1(\mathbb{R}, \mathbb{R})$ with
\begin{equation}
F(x,y) >0 \quad \textrm{if} \;  x>0 \; \textrm{or} \; y>0, \quad F(0,0)=0, \quad \textrm{and} \quad \gamma(x(t)) <\gamma_{max}<\infty.
\label{Eq:FGammaConditions}
\end{equation}
%For simplicity, let $\beta(x(t))$ be linear in $x(t)$, so
%\begin{equation}
%\beta(x(t)) =  \beta(t) x(t) \quad \textrm{with} \quad \beta(t)>0.
%\label{Eq:BetaCondition}
%\end{equation}

We recall that $A$ is the random variable representing the maturation age of immature individuals. The history function, $\rho(s)$, is chosen to belong to the space $L_1(A)$ where
\begin{equation*}
K_A(t) = \frac{\d A}{\d \lambda},
\end{equation*}
\sloppy{and $\lambda$ is the Lebesgue measure on $\mathbb{R}$. $L_1(A)$ satisfies the axioms for a phase space given in \citet{Hale1993,Hino1991}, so the solution of the IVP~\eqref{Eq:DDEIVP} exists and is unique in $L_1(A)$.}
In population modelling, it is likely that any realistic history is uniformly continuous and bounded. The space of bounded and uniformly continuous functions is a subspace of $L_1(A)$ and is a suitable phase space.

% \subsection{Non-Negativity of Solutions}

The age structured PDE~\eqref{Eq:McKendrickAgePDE} describes population dynamics in the presence of a maturation time. Consequently, solutions of \eqref{Eq:DDEIVP} must represent a population, and in particular, remain non-negative. However, the presence of delays in other models may lead to solutions that do not remain non-negative, as noted by \citet{Liu2007}. We begin our analysis by showing that the solution of the IVP~\eqref{Eq:DDEIVP}, $x(t)$, evolving from non-negative initial conditions remains non-negative. This property is a natural requirement for models of population dynamics.

\begin{proposition}\label{Prop:NonNegativitiy}
Let $F(x,y)$ and $ \gamma(x(t))$ satisfy equations~\eqref{Eq:FGammaConditions}
%and \eqref{Eq:BetaCondition}.
Moreover, assume that the history function satisfies
\begin{equation*}
\rho(s) \geq 0 \quad \forall s \in (-\infty,t_0].
\end{equation*}
Then, the solution of the IVP~\eqref{Eq:DDEIVP} remains non-negative for all time $t> t_0$.
\end{proposition}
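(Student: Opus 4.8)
The plan is to use a first-crossing (barrier) argument that exploits the sign structure encoded in \eqref{Eq:FGammaConditions}, supplemented by a strict-positivity perturbation to rule out merely tangential contact with zero. The crucial structural observation is that the maturation input $\bar{x}(t) = V_a(t) A_K(t)$ defined through \eqref{Eq:AKDefinition} is a superposition of the past states $x(t-\phi)$ against manifestly non-negative weights: the density $K_A \geq 0$, the birth rate $\beta > 0$, the ageing speeds $V_a > 0$, and the survival exponential are all non-negative. Hence, whenever $x(s) \geq 0$ for every $s \leq t$ (which, by hypothesis, includes the prescribed history $\rho \geq 0$), the integrand is non-negative and so $\bar{x}(t) \geq 0$. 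Combined with \eqref{Eq:FGammaConditions} this yields the key sign fact: if $x(t)=0$ while $x(s)\geq 0$ for all $s\leq t$, then $F[0,\bar{x}(t)]\geq 0$ and $\gamma(0)\cdot 0 = 0$, so the right-hand side of \eqref{Eq:DDEIVP} is non-negative at such a point.

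First I would suppose, for contradiction, that $x$ becomes negative for some $t>t_0$, and set $t_1 = \inf\{\, t > t_0 : x(t) < 0 \,\}$. Since $x$ is continuous, $x(t_1)=0$ and $x(s)\geq 0$ for all $s \leq t_1$, so the sign fact gives $\dot{x}(t_1) = F[0,\bar{x}(t_1)] \geq 0$. The difficulty, and the main obstacle, is that this inequality need not be strict: when $\bar{x}(t_1)=0$ a tangential touch $\dot{x}(t_1)=0$ does not by itself forbid $x$ from dipping below zero immediately afterwards, and just to the right of $t_1$ the history feeding $\bar{x}$ can acquire small negative contributions, so $\bar{x}$ need no longer be non-negative and the sign fact breaks down.

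To circumvent this I would regularise: for $\epsilon > 0$ consider the IVP \eqref{Eq:DDEIVP} with its right-hand side augmented by $+\epsilon$ and with the same non-negative history $\rho$, denoting the solution $x_\epsilon$. Repeating the first-crossing argument, at a putative first return $t_1^\epsilon$ to zero one has $x_\epsilon > 0$ on $[t_0, t_1^\epsilon)$, hence $\bar{x}_\epsilon(t_1^\epsilon)\geq 0$ and $\dot{x}_\epsilon(t_1^\epsilon) = F[0,\bar{x}_\epsilon(t_1^\epsilon)] + \epsilon \geq \epsilon > 0$; this contradicts that a first zero can only be reached from above with non-positive derivative, so $x_\epsilon(t) > 0$ for all $t>t_0$. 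Finally, invoking continuous dependence of solutions on the parameter $\epsilon$, which holds because $F,\gamma\in\mathcal{C}^1$ and the delay kernel in \eqref{Eq:AKDefinition} varies smoothly, I would let $\epsilon\to 0^+$ to conclude $x(t) = \lim_{\epsilon\to 0} x_\epsilon(t) \geq 0$ on every compact subinterval, and hence for all $t>t_0$.

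As an alternative that avoids the perturbation, I might argue by an integrating factor on the maximal interval where $x\geq 0$: there $F\geq 0$, so $\dot{x} \geq -\gamma(x)x \geq -\gamma_{max}\, x$, whence $\frac{\d}{\d t}\!\left(x(t)e^{\gamma_{max}(t-t_0)}\right)\geq 0$ and $x$ cannot decrease through zero. The residual subtlety is again the bootstrap establishing $\bar{x}\geq 0$ up to the endpoint of that interval, so I expect to still lean on the strict $\epsilon$-perturbation to close the argument cleanly.
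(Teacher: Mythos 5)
Your proof is correct, but it follows a genuinely different route from the paper's. The paper argues by cases on the initial data rather than by regularisation: if $\rho(t_0) > 0$, the sign conditions \eqref{Eq:FGammaConditions} give $\TimeDeriv x(t) \geq -\gamma(x(t))x(t) > -\gamma_{max}x(t)$ and Gronwall's inequality yields the strict bound $x(t) \geq \rho(t_0)\exp\left(-\gamma_{max}[t-t_0]\right) > 0$; if $\rho \equiv 0$ $A$-almost everywhere, then $x \equiv 0$ by uniqueness; and if $\rho(t_0) = 0$ but $\rho > 0$ on a set of positive $A$-measure, then $\bar{x}(t_0) > 0$, so $\TimeDeriv x(t)|_{t=t_0} = F(0,\bar{x}(t_0)) > 0$ and the solution enters the first case immediately. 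In other words, the tangential-contact difficulty that drives you to the $\epsilon$-perturbation is dispatched in the paper by strictness: Gronwall keeps $x$ bounded away from zero on compact intervals, so a first touch never occurs, while the strict positivity $F(0,y)>0$ for $y>0$, together with uniqueness of the zero solution, handles the cases where the solution starts at zero. Both arguments rest on the same key sign fact you isolate ($\bar{x}(t) \geq 0$ as long as the past of $x$ is non-negative). What your approach buys is robustness: it needs no case analysis, no appeal to uniqueness of the trivial solution, and would survive weakening the sign hypotheses to non-strict ones. What it costs is the appeal to continuous dependence of solutions on the parameter $\epsilon$ for an infinite-delay equation in the phase space $L_1(A)$ --- a heavier technical ingredient than anything in the paper's proof, and one that deserves a citation or proof in this setting --- and it concludes only $x \geq 0$, whereas the paper obtains strict positivity whenever the history is not identically zero. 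One small correction: your final alternative (Gronwall on the maximal interval where $x \geq 0$) does close on its own when $x(t_0)>0$, contrary to your expectation, because the Gronwall bound is strictly positive up to the right endpoint of that interval, so by continuity the interval cannot terminate; that continuation argument is exactly the rigorous form of the paper's first case, and no $\epsilon$-perturbation is needed there.
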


\begin{proof}
As $\rho(s) \geq 0$, it is simple to see that
\begin{equation*}
\bar{x}(t_0) = V_a(t_0)\int_{0}^{\infty} K_A(a_{\phi}(t_0)) \frac{ \beta \rho[t_0-\phi] }{V_a(t_0-\phi)}  \exp \left[ - \int_{t_0-\phi}^{t_0} \mu(x(s)) \d s\right]  \d \phi \geq 0.
\end{equation*}
We have a series of cases.

1) If $\rho(t_0) = x(t_0) >0$ then $F(x(t_0),\bar{x}(t_0)) >0$. Therefore,
\begin{equation*}
\TimeDeriv x(t)  =  F(x(t),\bar{x}(t)) - \gamma(x(t)) x(t)\geq -\gamma(x(t)) x(t) > - \gamma_{max} x(t)
\end{equation*}
and using Gronwall's inequality, we have
\begin{equation*}
x(t) \geq \rho(t_0)\exp\left(-\gamma_{max}[t-t_0]\right) > 0.
\end{equation*}

2) If $\rho(t_0) = 0$ and $\rho(s) = 0$ $A$-almost everywhere in $(-\infty,t_0)$, then $x(t)= 0$ is the solution of the IVP.

3) Finally, if $\rho(t_0) = 0$ and $\rho(s)>0$ on a set of $A$-positive measure in $(-\infty,t_0)$ then $\bar{x}(t_0) >0$ and
\begin{equation*}
\TimeDeriv x(t)|_{t=t_0}  =  F(x(t_0),\bar{x}(t_0)) - \gamma(t_0)x(t_0) =   F(0,\bar{x}(t_0)) > 0.
\end{equation*}
Consequently, $x(t)$ becomes positive immediately and case 3 reduces to case 1.

Therefore, solutions of the IVP~\eqref{Eq:DDEIVP} remain non-negative for all time $t>t_0$.
\end{proof}

\subsection{Linearisation of the DDE}

We continue the analysis of equation~\eqref{Eq:MatureDDE} by studying the local stability of equilibrium solutions. To do this, let $x^*(t) = x^* \in L_1(A) $ be an equilibrium of the IVP~\eqref{Eq:DDEIVP}, so
\begin{equation}
F\left( x^*,\bar{x}^* \right) = \gamma(x^*) x^*.
\label{Eq:EquilibriumCondition}
\end{equation}
The homeostatic delayed term $\bar{x}^*$ in \eqref{Eq:EquilibriumCondition} satisfies
\begin{equation*}
\bar{x}^* = \int_0^{\infty}  \beta x^* K_A(V_a^*\phi)\exp\left[ -\mu^*\phi\right] \d \phi = \beta x^* \mathcal{L}[K_A](\mu^*/V_a^*),
\end{equation*}
where $\mathcal{L}[f](s)$ is the Laplace transform of $f(x)$ evaluated at $s$.
Hence, $\bar{x}^*$ is a function of the density $K_A(t)$. However, if desired, it is possible to vary the homeostatic death rate $\mu^*$ to ensure that the equilibria value $x^*$ does not change for different densities $K_A(t)$ as shown by \citet{Cassidy2018}.

Set $z(t) = x(t)-x^*$ and for $z(t)$ small, similar to the discrete state dependent delay case considered by \citet{Hartung2006}, freeze the ageing and clearance rates at their homeostatic rates, so $V_a(t) = V_a^*$  and $\mu(s) = \mu^*$. Then, we define $\bar{z}(t)  =  \bar{x}(t)- \bar{x}^*$ so that
\begin{align}\label{Eq:ZbarDefinition} \notag
\bar{z}(t) &  =  \int_{0}^{\infty} K_A(V_a^*\phi)  \beta x[t-\phi]   \exp \left[ - \mu^* \phi \d s\right] -  \beta x^* K_A(V_a^*\phi)\exp\left[ -\mu^*\phi\right] \d \phi \\
& = {} \int_{0}^{\infty} K_A(V_a^*\phi)  \beta z[t-\phi]   \exp \left[ - \mu^* \phi \d s\right] \d \phi,
\end{align}
to translate the equilibrium to the origin. Then, the differential equation for $z(t)$ is
\begin{equation*}
\TimeDeriv z(t) = F(z(t)+x^*,\bar{z}(t)+\bar{x}^*)-\gamma(x(t))z(t) - \gamma(x(t))x^*.
\end{equation*}
Expanding the exponential integral in \eqref{Eq:ZbarDefinition} following \cite{Cassidy2018}, we find
\begin{equation*}
I := \int_{t-\phi}^t \mu(x(s)) \d s = \int_{t-\phi}^t \mu^* + \mu'(x^*)\left( x(s) - x^* \right) + \mathcal{O}(x(s) - x^*)^2 \d s,
\end{equation*}
so that
\begin{align*}
e^{-I} & = e^{-\mu^* \phi} \left[ 1 - \int_{t-\phi}^t \mu'(x^*)\left( x(s) - x^* \right) + \mathcal{O}(x(s) - x^*)^2 \d s\right] \\
& = e^{-\mu^* \phi} \left[ 1 - \int_{t-\phi}^t \mu'(x^*)z(s) + \mathcal{O}(|z(s)|^2) \d s \right].
\end{align*}
By making the ansatz
\begin{equation*}
z(t) = Ce^{\lambda t},
\end{equation*}
we compute the expression for $\bar{z}(t)$ from \eqref{Eq:ZbarDefinition}
\begin{equation*}
\begin{aligned}
\bar{z}(t) & = Cz(t) \int_{0}^{\infty} K_A(V_a^* \phi) \beta e^{-\lambda \phi} \left[  \exp \left[ - \mu^* \phi \right] + \mathcal{O}(z) \right] \d \phi \\
& =  Cz(t) \beta \mathcal{L}[K_A]([\mu^*+\lambda]/V_a^*) + \mathcal{O}(z^2).
\end{aligned}
\end{equation*}
%where $\mathcal{L}[f](s)$ is the Laplace transform of $f(x)$ evaluated at $s$.
Therefore, we write
\begin{equation*}
\TimeDeriv z(t) = k_1z(t) + k_2 \beta \mathcal{L}[K_A]([\mu^*+\lambda]/V_a^*) z(t)-\gamma^* z(t) + \mathcal{O}(z^2)
\end{equation*}
where $\gamma^* = \partial_x\gamma(x(t))|_{x=x^*}$, $k_1 = \partial_a F(a,b)|_{(x,\bar{x})}$ and $k_2 = \partial_{b} F(a,b)|_{(x,\bar{x})}$. Dropping nonlinear terms, the linearised equation is
\begin{equation}
\TimeDeriv z(t) = (k_1-\gamma^*)z(t) + k_2 \beta \mathcal{L}[K_A]([\mu^*+\lambda]/V_a^*)z(t) .
\label{Eq:LinearisedEquation}
\end{equation}
The characteristic equation corresponding to \eqref{Eq:LinearisedEquation} is
\begin{equation}
0 = \lambda - (k_1-\gamma^*) - k_2\beta \mathcal{L}[K_A]([\mu^*+\lambda]/V_a^*).
\label{Eq:CharacteristicEquationGeneric}
\end{equation}

Through a standard analysis, we study the local stability of the equilibrium $x^*$ for a density $K_A(t)$. 

\begin{proposition}\label{Prop:StabilityProposition}
1) If
\begin{equation*}
|k_2| \beta \mathcal{L}[K_A](\mu^*/V_a^*)< \gamma^* - k_1,
\end{equation*}
the equilibrium point $x^*$ is locally asymptotically stable.

2) If
\begin{equation*}
k_2 \beta \mathcal{L}[K_A](\mu^*/V_a^*) > \gamma^* - k_1,
\end{equation*}
the equilibrium point $x^*$ is unstable.
\end{proposition}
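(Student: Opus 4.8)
The plan is to locate the roots of the characteristic equation \eqref{Eq:CharacteristicEquationGeneric}, which I write as $\Delta(\lambda) = 0$ with
\[
\Delta(\lambda) = \lambda - (k_1 - \gamma^*) - k_2 \beta\, g(\lambda), \qquad g(\lambda) := \mathcal{L}[K_A]\!\left( (\mu^* + \lambda)/V_a^* \right).
\]
The single estimate driving both parts is that, for every $\lambda$ with $\mathrm{Re}(\lambda) \geq 0$, one has $|g(\lambda)| \leq \mathcal{L}[K_A](\mu^*/V_a^*)$. This follows because $K_A \geq 0$, $\mu^* \geq 0$ and $V_a^* > 0$: writing $g(\lambda) = \int_0^\infty K_A(\tau)\, e^{-(\mu^* + \lambda)\tau/V_a^*}\,\d\tau$ and bounding the modulus of the integrand by $K_A(\tau)\, e^{-(\mu^* + \mathrm{Re}(\lambda))\tau/V_a^*} \leq K_A(\tau)\, e^{-\mu^* \tau/V_a^*}$ returns precisely the real Laplace transform $g(0) = \mathcal{L}[K_A](\mu^*/V_a^*)$.

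For part 1 I would argue by contradiction. Suppose $\Delta$ had a root $\lambda$ with $\mathrm{Re}(\lambda) \geq 0$. Taking real parts of $\lambda = (k_1 - \gamma^*) + k_2 \beta\, g(\lambda)$ gives $\mathrm{Re}(\lambda) = (k_1 - \gamma^*) + k_2\beta\, \mathrm{Re}(g(\lambda))$, so since $\mathrm{Re}(\lambda)\geq 0$ and $|\mathrm{Re}(g(\lambda))| \leq |g(\lambda)| \leq \mathcal{L}[K_A](\mu^*/V_a^*)$, I obtain $0 \leq (k_1 - \gamma^*) + |k_2|\beta\, \mathcal{L}[K_A](\mu^*/V_a^*)$, that is $\gamma^* - k_1 \leq |k_2|\beta\,\mathcal{L}[K_A](\mu^*/V_a^*)$, contradicting the hypothesis. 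Hence no root lies in the closed right half-plane. Local asymptotic stability then follows from the spectral theory for linear autonomous retarded equations (the equation is of retarded type because $g(\lambda) \to 0$ as $\mathrm{Re}(\lambda) \to +\infty$): the real parts of the roots are bounded above, and a short continuity argument that repeats the estimate on a strip $\mathrm{Re}(\lambda) \geq -\delta$ for small $\delta > 0$ upgrades the conclusion to $\sup\{\mathrm{Re}(\lambda): \Delta(\lambda) = 0\} < 0$, which is what the linearisation principle for state-dependent delays used by \citet{Hartung2006} requires.

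For part 2 I would exploit that $\Delta$ is real-valued on the real axis, since $g(\lambda)$ is a convergent real integral for real $\lambda$. At $\lambda = 0$ we have $\Delta(0) = (\gamma^* - k_1) - k_2\beta\,\mathcal{L}[K_A](\mu^*/V_a^*)$, which is strictly negative exactly under the hypothesis of part 2. As $\lambda \to +\infty$ along the reals, $g(\lambda) = \mathcal{L}[K_A]((\mu^* + \lambda)/V_a^*) \to 0$ because $K_A$ is integrable, so $\Delta(\lambda) \to +\infty$. By the intermediate value theorem there is a real root $\lambda_0 > 0$, so a characteristic root lies in the open right half-plane and $x^*$ is unstable.

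The step I expect to be most delicate is not either sign computation but the passage, in part 1, from ``no roots in the closed right half-plane'' to genuine local asymptotic stability of the nonlinear, state-dependent equation \eqref{Eq:DDEIVP}: this needs both the strict spectral bound $\sup\mathrm{Re}(\lambda) < 0$ and the (nontrivial) linearised stability principle for state-dependent delays, together with checking that the distributed, infinite delay still fits the retarded framework and the phase-space axioms invoked earlier. One must also verify that the estimate on $|g(\lambda)|$ extends continuously just into $\mathrm{Re}(\lambda) < 0$, which relies on the Laplace transform of $K_A$ converging on a neighbourhood of $\mu^*/V_a^*$.
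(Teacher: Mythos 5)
Your proposal is correct and follows essentially the same route as the paper: part 1 is the identical contradiction argument based on bounding $\bigl\vert \mathcal{L}[K_A]([\mu^*+\lambda]/V_a^*)\bigr\vert$ by $\mathcal{L}[K_A](\mu^*/V_a^*)$ on the closed right half-plane, and part 2 is the same intermediate-value argument producing a real positive root. The only difference is that you are more careful than the paper about upgrading ``no roots with $\Re(\lambda)\geq 0$'' to a strict spectral bound and invoking a linearised stability principle for the state-dependent, infinite-delay setting -- a step the paper's proof simply asserts.
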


\begin{proof}
1)  Let $\lambda^*$ be a root of \eqref{Eq:CharacteristicEquationGeneric} and assume for contradiction that $\Re(\lambda^*)  \geq 0. $ We necessarily have
\begin{equation*}
\lambda^* = (k_1-\gamma^*) + k_2 \beta \mathcal{L}[K_A]([\mu^*+\lambda^*]/V_a^*),
\end{equation*}
and we calculate
\begin{equation*}
\Re(\lambda^*) =(k_1-\gamma^*) + k_2 \beta \Re \left[ \mathcal{L}[K_A]([\mu^*+\lambda^*]/V_a^*) \right].
\end{equation*}
We note that
\begin{equation*}
k_2 \beta \Re \left[ \mathcal{L}[K_A]([\mu^*+\lambda^*]/V_a^*) \right] \leq |k_2 \beta \mathcal{L}[K_A]([\lambda^*+\mu^*]/V_a^*)|.
\end{equation*}
While, for arbitrary $\nu = \nu_r+i\nu_i \in \mathbb{C}$,
\begin{equation*}
\begin{aligned}
\left\vert k_2 \beta\mathcal{L}[K_A]([\mu^*+\nu]/V_a^*) \right\vert & = \left\vert k_2\right\vert \beta \left\vert \int_0^{\infty}\exp\left[-(\mu^*+\nu_r+i\nu_i)\phi\right] K_A(V_a^*\phi) \d \phi\right\vert \\[0.2cm]
& \leq   |k_2| \beta\int_0^{\infty} \exp\left[ -(\mu^*+\nu_r)\phi \right] K_A(V_a^* \phi)\left\vert e^{-i\nu_i\phi} \right\vert \d \phi \\[0.2cm]
& =  |k_2| \beta \mathcal{L}[K_A]([\mu^*+\nu_r]/V_a^*).
\end{aligned}
\end{equation*}
Moreover, if $\nu_r  \geq 0$,
\begin{equation*}
|k_2| \beta \mathcal{L}[K_A]([\mu^*+\nu_r]/V_a^*) \leq |k_2| \beta \mathcal{L}[K_A](\mu^*/V_a^*).
\end{equation*}
Therefore, using the assumption in 1), we find
\begin{equation*}
\begin{aligned}
\Re(\lambda^*) & =  (k_1-\gamma^*) + k_2 \beta \Re[ \mathcal{L}[K_A]([\lambda^*+\mu^*]/V_a^*)] \leq (k_1-\gamma^*) + | k_2| \beta \mathcal{L}[K_A](\mu^*/V_a^*) < 0,
\end{aligned}
\end{equation*}
which is a contradiction, so no such $\lambda^*$ can exist.
Therefore, all roots of the characteristic equation have negative real part and the equilibrium is stable.

2) To show instability, we will prove that there must be one characteristic root with positive real part. Define
\begin{equation*}
g(\lambda) := k_1-\gamma-\lambda +  k_2 \beta \mathcal{L}[K_A]([\lambda + \mu^*]/V_a^*),
\end{equation*}
and note that $g$ is continuous with
\begin{equation*}
g(0) = k_1-\gamma +  k_2 \beta \mathcal{L}[K_A]( \mu^*/V_a^*) >0 \quad \textrm{and} \quad \lim \limits_{\lambda \to \infty} g(\lambda) = -\infty.
\end{equation*}
Then, there must be a real $\lambda^* > 0$ such that $g(\lambda^*) =0$. The equilibrium is therefore unstable.
\end{proof}

We note that if $k_2>0$, i.e. the production of mature individuals is controlled through positive feedback with the number of maturing individuals at time $t$, then Proposition~\ref{Prop:StabilityProposition} completely characterizes the local stability of $x^*$. If $k_2<0$, it seems likely that $x^*$ would lose stability through a Hopf bifurcation, similar to the discrete delay case. A similar analysis was done in the constant ageing rate by \citet{Yuan2011}. However, \citet{Yuan2011} did not consider death of immature individuals, nor the linear clearance of mature individuals which corresponds to $\mu = \gamma =0$. 

\section{Distributed Delay Differential Equations with Specific Maturation Probabilities}

Next, we study the DDE found in Theorem~\ref{Theorem:GeneralDDE} for various density functions. By first considering the characteristic equation~\eqref{Eq:CharacteristicEquationGeneric} for specific densities $K_A(t)$, we motivate the reduction of these population models to familiar discrete DDEs and transit compartment ODEs. In the discussion that follows, we once again assume that $x^*\in L_1(A)$ is an equilibrium point so that $\mu(x^*) = \mu^*$ and $V_a(t) = V_a^*$. Denote the homeostatic maturation time as the first moment of the random variable $A$ with constant ageing rate $V_a^*$,
\begin{equation*}
\tau^* = \int_0^{\infty}tK_A(V_a^*t)\d t.
\end{equation*}
Consequently, the expected homeostatic maturation age is given by $\mathcal{T} = V_a^* \tau^*$.

We first consider the degenerate distribution and recover the familiar state dependent discrete DDE. Next, we use a linear chain-type technique to reduce state dependent uniformly distributed DDEs to a system involving two state dependent delays. Finally, we show how to reduce a gamma distributed DDE to a transit compartment system of ODEs.

However, true equivalence between the distributed DDE and the reduced form does not follow directly. We must take care when prescribing initial conditions and history functions so that solutions of the different formulations are in fact equivalent. Only then do these reductions allow for the use of the highly efficient numerical methods available for discrete DDEs and ODEs available in most programming languages.

\subsection{Deterministic Maturation}\label{Sec:StateDependentDDE}

Assuming that maturation is a deterministic process and occurs after achieving the threshold age $\mathcal{T}$ implies that $K_A(t)$ is the degenerate distribution with
\begin{equation}
K_A\left( \int_{t-\phi}^t V_a(s)\d s \right) = \delta \left( \int_{t-\phi}^t V_a(s)\d s-\mathcal{T} \right).
\label{Eq:DeltaDensityDefinition}
\end{equation}
where $\delta(x)$ is the Dirac delta function. In the deterministic case, all individuals mature at precisely the same age $\mathcal{T}$. At the equilibrium $x^*$, using \eqref{Eq:CharacteristicEquationGeneric},  the characteristic equation is
%\begin{align}\label{Eq:DiscreteCharacteristicEquation} \notag
%0 & = \lambda - (k_1-\gamma^*) - k_2\beta \exp\left[-(\mu^*+\lambda)\mathcal{T}/V_a^* \right]\\
%& {} =  \lambda - (k_1-\gamma^*) - k_2\beta \exp\left[-(\mu^*+\lambda)\tau^*\right],
%\end{align}
\begin{equation}\label{Eq:DiscreteCharacteristicEquation}
0  = \lambda - (k_1-\gamma^*) - k_2\beta \exp\left[-(\mu^*+\lambda)\mathcal{T}/V_a^* \right]
 =  \lambda - (k_1-\gamma^*) - k_2\beta \exp\left[-(\mu^*+\lambda)\tau^*\right],
\end{equation}
which is exactly the characteristic equation of a discrete DDE. This is unsurprising, since it is well known that threshold conditions lead to discrete DDEs \citep{Otto2017,Smith1993}.

Returning to the DDE~\eqref{Eq:MatureDDE} with $K_A(t)$ given by \eqref{Eq:DeltaDensityDefinition}, the threshold maturation age $\mathcal{T}$ allows us to calculate when an individual that matures at time $t$ began maturation. The maturation time, $\tau(x(t))$, must satisfy the implicit threshold condition
\begin{equation}
\mathcal{T} = \int_{t-\tau(x(t))}^t V_a(s)\d s.
\label{Eq:TauImplicitEquation}
\end{equation}
We use the definition of $\tau(x(t))$ to evaluate the convolution integral given in \eqref{Eq:AKDefinition} to find
\begin{align*}
A_{\delta}(t) & =\int_{0}^{\infty} \delta \left( \int_{t-\phi}^t V_a(s)\d s-\mathcal{T} \right) \frac{ \beta x(t-\phi)  }{V_a(t-\phi)}  \exp \left[ - \int_{t-\phi}^t \mu(x(s)) \d s\right] \d \phi \\
 & = \frac{\beta x[t-\tau(x(t))]}{V_a(t-\tau(x(t)))} \exp \left[ - \int_{t-\tau(x(t))}^t \mu(x(s)) \d s\right].
\end{align*}
Consequently, the corresponding IVP to \eqref{Eq:MatureDDE} with state dependent discrete delay is
\begin{equation}
\left.
\begin{aligned}
\TimeDeriv x(t) & = F\left( x(t), \beta x[t-\tau(x(t))] \exp \left[ - \int_{t-\tau(t)}^t \mu(x(s)) \d s\right]\frac{V_a(t)}{V_a(t-\tau(t))} \right) - \gamma x(t) \\
x(s) & = \rho(s)\quad s \in (-\infty,t_0].
\end{aligned}
\right \}
\label{Eq:StateDependentMatureDDE}
\end{equation}
Choosing the history function for \eqref{Eq:DelayTimeDDE} requires a careful consideration of how $\rho(s)$ controls the ageing velocity $V_a(t)$. For homeostatic histories $\rho(s) = x^*$, we can prescribe $\tau(\rho(s)) = \tau^*$.

To implement \eqref{Eq:StateDependentMatureDDE} numerically, it is necessary to solve \eqref{Eq:TauImplicitEquation} to find the maturation time $\tau(x(t))$. This can be done by
differentiating \eqref{Eq:TauImplicitEquation} to find
\begin{equation}
\TimeDeriv \tau(x(t)) = 1-\frac{V_a(t)}{V_a(t-\tau(x(t)))},
\label{Eq:DelayTimeDDE}
\end{equation}
and imposing the correct initial condition so that the solution of \eq{Eq:DelayTimeDDE} also solves 
\eqref{Eq:TauImplicitEquation}.
In the case that $\rho(s) = x^*$, the it is simple to set $\tau(0)= \tau^*$. However, for more general initial data $\rho(s)$, choosing an appropriate initial condition for \eqref{Eq:DelayTimeDDE} can be delicate \citep{Otto2017}.

Then, we can solve the discrete state dependent DDE by solving the system of equations given by \eqref{Eq:StateDependentMatureDDE} and \eqref{Eq:DelayTimeDDE}.  Hence the age structured PDE framework in Section~\ref{Sec:AgeStructuredPDEReduction} offers an alternative to the ``moving threshold'' method to derive state dependent DDEs as described by \citet{Otto2017}.

\subsection{Uniformly Distributed Maturation}\label{Sec:CompactSupportDistribution}

We consider uniformly distributed DDEs centered about the expected homeostatic maturation age $\mathcal{T}$. In the simplest case, the uniform distribution defines lower and upper threshold ages and assigns equal weight to each age falling between the thresholds. The probability density function corresponding to a uniform distribution centred at $\mathcal{T}$ is
\begin{equation}
K_U(a) = \left\{
\begin{array}{ll}
\frac{1}{2V_a^*\delta} & \textrm{if} \;   a \in [ \mathcal{T}-V_a^* \delta,\mathcal{T}+ V_a^*\delta]\\
0 & \textrm{otherwise}.  \\
\end{array}
\right.
\label{Eq:UniformDensityFunction}
\end{equation}
At the equilibrium $x^*$, with $K_A(t)$ given by the uniform density  \eqref{Eq:UniformDensityFunction}, the characteristic equation \eqref{Eq:CharacteristicEquationGeneric} is
\begin{align}\label{Eq:UniformCharacteristicEquation} \notag
0 & = \lambda - (k_1-\gamma^*) - k_2\beta \frac{1}{2\delta V_a^*[\lambda+\mu^*]/V_a^*}\left[e^{-(\lambda+\mu^*)(\mathcal{T}-V_a^* \delta)/V_a^*}- e^{-(\lambda+\mu^*)(\mathcal{T}+V_a^* \delta)/V_a^*}\right] \\
& {} = \lambda - (k_1-\gamma^*) - k_2\beta \frac{1}{2\delta(\lambda+\mu^*)}\left[e^{-(\lambda+\mu^*)(\tau^*-\delta)}- e^{-(\lambda+\mu^*)(\tau^*+\delta)}\right].
\end{align}
$\mathcal{T}-V_a^*\delta$ and $\mathcal{T}+V_a^*\delta$ represent the minimal and the maximal ages at which an individual can mature. Due to the variable ageing rate, the minimal and maximal delay times, $\tau_{min}(x(t))$ and $\tau_{max}(x(t))$, are state dependent, and implicitly defined by
\begin{equation*}
\mathcal{T}-V_a^*\delta = \int_{t-\tau_{min}(x(t))}^t V_a(s)\d s \quad \textrm{and} \quad \mathcal{T}+V_a^*\delta =  \int_{t-\tau_{max}(x(t))}^t V_a(s)\d s.
\end{equation*}
We note that, at homeostasis, $V_a(s) = V_a^*$ so
\begin{equation*}
\mathcal{T}-V_a^*\delta = \tau_{min}(x^*)V_a^* \quad \textrm{and} \quad \mathcal{T}+V_a^*\delta =  \tau_{max}(x^*)V_a^*.
\end{equation*}
Therefore, the terms $\tau^* - \delta$ and $\tau^*+\delta$ in \eqref{Eq:UniformCharacteristicEquation} correspond to the minimal and maximal homeostatic delay times.

The presence of minimal and maximal delay terms in \eqref{Eq:UniformCharacteristicEquation} hints that a uniformly distributed DDE may be reducible to a discrete DDE with two distinct delays.

Inserting the uniform density \eqref{Eq:UniformDensityFunction} into the convolution integral \eqref{Eq:AKDefinition} gives
\begin{align*}
A_U(t) & = \int_0^{\infty} K_U \left( \int_{t-\phi}^t V_a(s)\d s\right) \frac{ \beta x(t-\phi)  }{V_a (t-\phi)}  \exp \left[ - \int_{t-\phi}^t \mu(x(s)) \d s\right] \d \phi \\
& = \int_{\tau_{min}(t)}^{\tau_{max}(t)} \frac{1}{2\delta} \frac{ \beta x(t-\phi) }{\hat{V}_a (t-\phi)}  \exp \left[ - \int_{t-\phi}^t \mu(x(s)) \d s\right] \d \phi.
\end{align*}
Thus the state dependent uniform distributed DDE is
\begin{equation}
\left.
\begin{aligned}
\TimeDeriv x(t) &= F(x(t),A_U(t)V_a(t)) - \gamma(x(t)) x(t) \\
x(s) &= \rho(s), \quad s \in (-\infty,t_0].
\end{aligned}
\right \}
\label{Eq:UniformDDE}
\end{equation}

\subsubsection{Reduction to Discrete DDE}

Next, we show that \eqref{Eq:UniformDDE} can be reduced to an IVP with two state dependent discrete delays. Once again, this is advantageous, as numerical algorithms for systems of state dependent discrete DDEs are available in most programming languages.

We begin by formalizing the link between uniformly distributed DDEs and discrete DDEs that was hinted at in \eqref{Eq:UniformCharacteristicEquation}. To do this, we proceed similarly to the linear chain technique and show how to write the delayed kernel as the solution of a differential equation. However, unlike the linear chain technique, we will not recover a system of ODEs, but rather a system of differential equations
with two state dependent discrete delays. The technique here can also be adapted to ``tent'' like distributions (see \citep{Teslya2015}).

\begin{lemma}\label{Lemma:UniformDerivative}
$A_U(t)$ satisfies the differential equation
\begin{align}\label{Eq:AUDifferentialEquation}
\TimeDeriv A_U(t)  & =  \frac{1}{2\delta}\left[ \frac{ \beta x[t-\tau_{min}(t)] }{\hat{V}_a (t-\tau_{min}(t))}  \exp \left[ - \int_{t-\tau_{min}(t)}^t \mu(x(s)) \d s\right] \frac{V_a(t)}{V_a(t-\tau_{min}(t))}   \right. \\ \notag
 & \qquad  - \left.  \frac{ \beta x[t-\tau_{max}(t)]  }{\hat{V}_a (t-\tau_{max}(t))}  \exp \left[ - \int_{t-\tau_{max}(t)}^t \mu(x(s)) \d s\right]\frac{V_a(t)}{V_a(t-\tau_{max}(t))}  \right] - \mu(x(t)) A_U(t).
\end{align}
\end{lemma}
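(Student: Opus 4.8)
The plan is to differentiate $A_U(t)$ directly using the Leibniz rule for an integral with $t$-dependent limits. Before doing so, I would first perform the change of variable $\sigma = t-\phi$ (the birth time of an individual maturing at time $t$), which sends the integration interval $\phi\in[\tau_{min}(t),\tau_{max}(t)]$ to $\sigma\in[t-\tau_{max}(t),\,t-\tau_{min}(t)]$ and rewrites
\[
A_U(t) = \int_{t-\tau_{max}(t)}^{t-\tau_{min}(t)} \frac{1}{2\delta}\frac{\beta x(\sigma)}{\hat{V}_a(\sigma)}\exp\left[-\int_{\sigma}^t \mu(x(s))\,\d s\right]\d\sigma.
\]
The point of this substitution is that, afterwards, the integrand depends on $t$ \emph{only} through the upper endpoint $t$ of the inner $\mu$-integral; the factor $\beta x(\sigma)/\hat{V}_a(\sigma)$ no longer carries any explicit $t$-dependence. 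This is what makes the interior term collapse cleanly.

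Next I would apply the Leibniz rule. Writing $G(t,\sigma)$ for the integrand, the derivative splits into three contributions: the two boundary terms coming from differentiating the upper and lower limits, and the interior integral $\int \partial_t G\,\d\sigma$. Since $\partial_t G = -\mu(x(t))\,G$, the interior integral is exactly $-\mu(x(t))A_U(t)$, which produces the final term of \eqref{Eq:AUDifferentialEquation}. The upper limit $t-\tau_{min}(t)$ contributes $+G(t,t-\tau_{min}(t))\,(1-\tau_{min}'(t))$ and the lower limit $t-\tau_{max}(t)$ contributes $-G(t,t-\tau_{max}(t))\,(1-\tau_{max}'(t))$, where $G$ evaluated at these endpoints reproduces the $\beta x[t-\tau_{min/max}(t)]/\hat{V}_a(\cdot)$ and exponential factors appearing in the statement.

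It then remains to identify the velocity ratios. Differentiating the implicit threshold conditions defining $\tau_{min}$ and $\tau_{max}$, exactly as in the derivation of \eqref{Eq:DelayTimeDDE}, gives $\tau_{min}'(t) = 1 - V_a(t)/V_a(t-\tau_{min}(t))$ and the analogous identity for $\tau_{max}$; hence $1-\tau_{min}'(t) = V_a(t)/V_a(t-\tau_{min}(t))$ and likewise for $\tau_{max}$. Substituting these into the two boundary terms supplies precisely the correction factors $V_a(t)/V_a(t-\tau_{min}(t))$ and $V_a(t)/V_a(t-\tau_{max}(t))$ in \eqref{Eq:AUDifferentialEquation}, completing the identification.

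The routine parts are the Leibniz differentiation and the implicit differentiation of the threshold conditions. The main thing to get right --- and the step I would treat most carefully --- is the bookkeeping of $t$-dependence: without the change of variable $\sigma=t-\phi$, differentiating $\beta x(t-\phi)/\hat{V}_a(t-\phi)$ and the exponential under the $\phi$-integral generates extra $\mu(x(t-\phi))$ and $x'$, $V_a'$ terms that must cancel, whereas the substitution confines all the $t$-dependence to the single clean factor $-\mu(x(t))$, so that only the boundary terms survive. Verifying that the endpoint evaluations of $G$ match the stated factors, including the correct signs on the two boundary contributions, is the one place where an error would most easily creep in.
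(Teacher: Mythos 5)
Your proof is correct, and it rests on the same two ingredients as the paper's own proof: the Leibniz rule applied to an integral with state-dependent limits, and the implicit differentiation of the threshold conditions giving $1-\TimeDeriv\tau_{min}(t) = V_a(t)/V_a(t-\tau_{min}(t))$ and its $\tau_{max}$ analogue (equation \eqref{Eq:MinMaxDelayDifferentialEquation}). The genuine difference is in the decomposition. The paper works directly in the $\phi$ variable, where the integrand depends on $t$ both through $\beta x(t-\phi)/\hat{V}_a(t-\phi)$ and through the exponential; after applying Leibniz it must therefore observe that the $t$-derivative of the integrand equals minus its $\phi$-derivative minus $\mu(x(t))$ times the integrand, and then integrate this total $\phi$-derivative to produce endpoint evaluations which combine with the Leibniz boundary terms into the factors $\left(\TimeDeriv\tau_{min}(t)-1\right)$ and $\left(\TimeDeriv\tau_{max}(t)-1\right)$. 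Your change of variable $\sigma=t-\phi$ performs that cancellation once and for all at the level of the integrand: all remaining $t$-dependence is confined to the upper limit of the inner $\mu$-integral, so the interior term collapses immediately to $-\mu(x(t))A_U(t)$ and no integration by parts is needed; only the two Leibniz boundary terms survive, carrying the factors $\left(1-\TimeDeriv\tau_{min}(t)\right)$ and $\left(1-\TimeDeriv\tau_{max}(t)\right)$, which are exactly the negatives of the paper's combined factors (consistent with the endpoints swapping roles under the substitution). The two computations are line-for-line equivalent, but your version buys a cleaner bookkeeping: it isolates the structural reason the interior terms cancel, which is precisely the step you correctly identify as the most error-prone part of the direct approach.
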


\begin{proof}
Similar to the linear chain technique, we differentiate $A_U(t)$ using Leibniz's rule to find
\begin{align*}
\TimeDeriv A_U(t)  & =  \frac{1}{2\delta} \left[  \frac{ \beta x[t-\tau_{max}(t)] }{\hat{V}_a (t-\tau_{max}(t))}  \exp \left[ - \int_{t-\tau_{max}(t)}^t \mu(x(s)) \d s\right]\TimeDeriv \tau_{max}(t) \right. \\
& \quad\qquad {} - \left. \frac{ \beta x[t-\tau_{min}(t)] }{\hat{V}_a (t-\tau_{min}(t))}  \exp \left[ - \int_{t-\tau_{min}(t)}^t \mu(x(s)) \d s\right] \TimeDeriv \tau_{min}(t)\right] \\
& \quad {} + \frac{1}{2\delta} \int_{\tau_{min}(t)}^{\tau_{max}(t)} \TimeDeriv \left( \frac{ \beta x(t-\phi) }{\hat{V}_a (t-\phi)}  \exp \left[ - \int_{t-\phi}^t \mu(x(s)) \d s\right] \right) \d \phi.
\end{align*}
We note that
\begin{align*}
\TimeDeriv \left( \frac{ \beta x(t-\phi)  }{\hat{V}_a (t-\phi)}  \exp \left[ - \int_{t-\phi}^t \mu(x(s)) \d s\right] \d \phi \right)  = & - \frac{\d }{\d \phi} \left( \frac{ \beta x(t-\phi) }{\hat{V}_a (t-\phi)}  \exp \left[ - \int_{t-\phi}^t \mu(x(s)) \d s\right]  \right) \\
& {} \quad  -\mu(x(t))\frac{ \beta x(t-\phi) }{\hat{V}_a (t-\phi)}  \exp \left[ - \int_{t-\phi}^t \mu(x(s)) \d s\right],
\end{align*}
so that, integrating by parts,
\begin{align*}
& \int_{\tau_{min}(t)}^{\tau_{max}(t)} \TimeDeriv \left( \frac{ \beta x(t-\phi) }{\hat{V}_a (t-\phi)}  \exp \left[ - \int_{t-\phi}^t \mu(x(s)) \d s\right] \right) \d \phi  \\
& \qquad\qquad = \left. \left( -\frac{1}{2 \delta} \frac{ \beta x(t-\phi)  }{\hat{V}_a (t-\phi)}  \exp \left[ - \int_{t-\phi}^t \mu(x(s)) \d s\right]\right) \right|_{\phi = \tau_{min}(t)}^{\tau_{max}(t)}  - \mu(x(t))A_U(t).
\end{align*}
Consequently, the derivative of $A_U(t)$ is
\begin{equation*}
\begin{aligned}
\TimeDeriv A_U(t) = & \frac{1}{2\delta}\left[  \frac{ \beta x[t-\tau_{max}(t)]  }{\hat{V}_a (t-\tau_{max}(t))}  \exp \left[ - \int_{t-\tau_{max}(t)}^t \mu(x(s)) \d s\right]\left(\TimeDeriv \tau_{max}(t)-1\right) \right. \\
& \qquad {} - \left. \frac{ \beta x[t-\tau_{min}(t)] }{\hat{V}_a (t-\tau_{min}(t))}  \exp \left[ - \int_{t-\tau_{min}(t)}^t \mu(x(s)) \d s\right] \left( \TimeDeriv \tau_{min}(t)-1\right) \right]-\mu(x(t))A_U(t).
\end{aligned}
\end{equation*}
To finish the proof, we note that, similar to \eqref{Eq:DelayTimeDDE}, $\tau_{min}(x(t))$ and $\tau_{max}(x(t))$ solve the following differential equations
\begin{equation}
\TimeDeriv \tau_{min}(x(t))-1 = -\frac{V_a(t)}{V_a(t-\tau_{min}(x(t)))} \quad \textrm{and} \quad \TimeDeriv \tau_{max}(x(t))-1 = -\frac{V_a(t)}{V_a(t-\tau_{max}(x(t)))}.
\label{Eq:MinMaxDelayDifferentialEquation}
\end{equation}
The identities in equation~\eqref{Eq:MinMaxDelayDifferentialEquation} give \eqref{Eq:AUDifferentialEquation}.
\end{proof}

By writing the delay term $A_U(t)$ as a solution of a differential equation, we are able to reduce the distributed DDE to a system with state dependent discrete delays. Once again, this allows for simulation of the distributed DDE \eqref{Eq:UniformDDE} using existing techniques.  This relationship is formalized in the following theorem.
\begin{theorem}
The IVP \eqref{Eq:UniformDDE} is equivalent to the IVP with the following system of discrete delay differential equations
\begin{equation}
\left.
\begin{aligned}
\TimeDeriv x(t) & = F(x(t),y(t)V_a(t)) - \gamma(x(t)) x(t) \\
\TimeDeriv y(t) & = \frac{1}{2\delta}\left[ \frac{ \beta x[t-\tau_{min}(t)] }{\hat{V}_a (t-\tau_{min}(t))}  \exp \left[ - \int_{t-\tau_{min}(t)}^t \mu(x(s)) \d s\right] \frac{V_a(t)}{V_a(t-\tau_{min}(t))}   \right. \\
 & \quad {} - \left.  \frac{ \beta x[t-\tau_{max}(t)] }{\hat{V}_a (t-\tau_{max}(t))}  \exp \left[ - \int_{t-\tau_{max}(t)}^t \hspace{-0.2cm}\mu(x(s)) \d s\right]\frac{V_a(t)}{V_a(t-\tau_{max}(t))}  \right]\hspace{-0.1cm} - \hspace{-0.1cm}\mu(x(t)) y(t).
\end{aligned}
\right \}
\label{Eq:TwoDelayDiscreteDDE}
\end{equation}
with suitably chosen initial data.
\end{theorem}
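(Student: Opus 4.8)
The plan is to establish the equivalence in both directions, treating the already-proved Lemma~\ref{Lemma:UniformDerivative} as the core computation and reducing everything else to a uniqueness argument for a linear auxiliary equation. First I would handle the forward direction: given a solution $x(t)$ of the distributed DDE~\eqref{Eq:UniformDDE}, I set $y(t) := A_U(t)$, where $A_U(t)$ is the convolution integral. By Lemma~\ref{Lemma:UniformDerivative} this $y(t)$ satisfies exactly the $\dot y$ equation in~\eqref{Eq:TwoDelayDiscreteDDE}, and since the second argument of $F$ in~\eqref{Eq:UniformDDE} is $A_U(t)V_a(t) = y(t)V_a(t)$, the two $\dot x$ equations coincide. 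Hence $(x,y)$ solves~\eqref{Eq:TwoDelayDiscreteDDE}, and reading off $y(t_0) = A_U(t_0)$ identifies the initial datum that must be prescribed for $y$; its explicit value is obtained by evaluating the integral definition of $A_U$ against the history $\rho$ at $t=t_0$.

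For the reverse direction, I would take a solution $(x,y)$ of~\eqref{Eq:TwoDelayDiscreteDDE} equipped with this initial datum and the common history $x(s)=\rho(s)$, and show it reconstructs a solution of~\eqref{Eq:UniformDDE}. The key observation is that the bracketed forcing term in the $\dot y$ equation depends only on $x$ and the delays $\tau_{min},\tau_{max}$, not on $y$ itself, so $y$ obeys a linear scalar equation $\dot u = (\textrm{forcing depending on } x) - \mu(x(t))u$. Defining $\tilde A(t) := A_U(t)$ computed from this particular $x$ and applying Lemma~\ref{Lemma:UniformDerivative} to this $x$ shows that $\tilde A$ obeys the very same linear equation with the very same forcing. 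Consequently the difference $w(t):=y(t)-\tilde A(t)$ satisfies $\dot w(t) = -\mu(x(t))\,w(t)$ with $w(t_0)=y(t_0)-A_U(t_0)=0$, so $w\equiv 0$ by Gronwall's inequality. Therefore $y(t)=A_U(t)$ for all $t\geq t_0$, and substituting back into the $\dot x$ equation shows that $x$ solves~\eqref{Eq:UniformDDE}.

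The delays $\tau_{min}(t)$ and $\tau_{max}(t)$ are not free parameters: for the system to be self-contained they must be adjoined as solutions of the differential equations~\eqref{Eq:MinMaxDelayDifferentialEquation}, with initial conditions chosen so that they satisfy their implicit threshold definitions at $t_0$. For homeostatic histories $\rho(s)=x^*$ these reduce to $\tau_{min}(t_0)=\tau^*-\delta$ and $\tau_{max}(t_0)=\tau^*+\delta$, but for general $\rho$ the selection of consistent initial data is delicate, exactly as in the single-delay case of~\eqref{Eq:DelayTimeDDE} discussed by \citet{Otto2017}. I expect this initialisation --- pinning $y(t_0)$ to the history-determined value $A_U(t_0)$ and starting the delays on their threshold manifolds --- to be the main obstacle, since it is precisely the ``suitably chosen initial data'' that the statement leaves implicit; once it is in place, the equivalence follows at once from the Lemma together with the one-line uniqueness argument above.
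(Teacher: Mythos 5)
Your proposal is correct and takes essentially the same route as the paper: both arguments rest on Lemma~\ref{Lemma:UniformDerivative} together with matching the initial datum $y(t_0)$ to the history-determined value of $A_U(t_0)$ (the paper's condition~\eqref{Eq:YICCondition}, which the paper also shows is necessary). The only difference is one of rigour rather than substance --- your explicit Gronwall/uniqueness argument for $w = y - A_U$ spells out what the paper asserts as ``simple to see,'' and your remark about adjoining $\tau_{min}$ and $\tau_{max}$ via~\eqref{Eq:MinMaxDelayDifferentialEquation} with threshold-consistent initial values makes explicit a detail the paper handles only implicitly.
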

\begin{proof}
Using Lemma~\ref{Lemma:UniformDerivative}, it is simple to see that
\begin{equation}
y(t)V_a(t) = A_U(t)V_a(t),
\label{Eq:DiscreteEquivalenceIdentity}
\end{equation}
and the other terms in the differential equations are identical if the initial data are equivalent. It therefore remains to show that we can choose suitable history functions for the distributed and discrete DDEs. For the history function of the distributed DDE \eqref{Eq:UniformDDE}, $\rho(s)$, setting the initial data of \eqref{Eq:TwoDelayDiscreteDDE} to be
\begin{equation*}
x(s) = \rho(s)
\end{equation*}
and
\begin{equation*}
y(t_0) =  \int_{\tau_{min}(t)}^{\tau_{max}(t)} \frac{1}{2\delta} \frac{ \beta \rho(t_0-\phi) }{\hat{V}_a (t_0-\phi)}  \exp \left[ - \int_{t_0-\phi}^{t_0} \mu(\rho(s)) \d s\right] \d \phi.
\end{equation*}
gives the desired equivalence \citep{Teslya2015}. To convert from \eqref{Eq:TwoDelayDiscreteDDE} with history function $x(s) = \eta(s)$ to \eqref{Eq:UniformDDE}, $y(t_0)$ must satisfy
\begin{equation}
y(t_0) =  \int_{\tau_{min}(t)}^{\tau_{max}(t)} \frac{1}{2\delta} \frac{ \beta \eta(t_0-\phi) }{\hat{V}_a (t_0-\phi)}  \exp \left[ - \int_{t_0-\phi}^{t_0} \mu(\eta(s)) \d s\right] \d \phi.
\label{Eq:YICCondition}
\end{equation}
By taking the initial data for \eqref{Eq:UniformDDE} to be $x(s) = \eta(s)$, we see that this condition is sufficient for equivalence of  \eqref{Eq:TwoDelayDiscreteDDE} and \eqref{Eq:UniformDDE}. Now, if \eqref{Eq:YICCondition} does not hold, then \eqref{Eq:DiscreteEquivalenceIdentity} cannot be satisfied at $t =t_0$, so \eqref{Eq:YICCondition} is a necessary and sufficient condition to be able to convert the system of DDEs \eqref{Eq:TwoDelayDiscreteDDE} into the distributed DDE \eqref{Eq:UniformDDE}.
\end{proof}

\subsection{Gamma Distributed Maturation and a Generalized Linear Chain Technique}\label{Sec:GammaDistributedDDE}

Finally, we study gamma distributed DDEs and show how to reduce the state-dependent gamma distributed DDE to a transit chain of ODEs.
The probability density function of the gamma distribution is
\begin{equation}
g_b^j(x) = \frac{b^jx^{j-1}e^{-bx}}{\Gamma(j)},
\label{Eq:GammaDensityFunction}
\end{equation}
where $j,b \in \mathbb{R}$.

Again, let $\mathcal{T}$ denote the mean maturation age and fix $j>0$. Then, we have the following relationships
\begin{equation*}
\mathcal{T} = j/V_a^*, \quad \sigma^2 = j/(V_a^*)^2, \quad \textrm{and} \quad K_{g}(\sigma) = g_{V_a^*}^j(\sigma),
\end{equation*}
where $\sigma^2$ is the variance of the gamma distribution and we set $b=V_a^*$.

Calculating \eqref{Eq:CharacteristicEquationGeneric} for the gamma density in  \eqref{Eq:GammaDensityFunction} gives
\begin{equation}
0= k_1-\gamma - \lambda + k_2 \beta\frac{(V_a^*)^j}{(V_a^*+[\lambda+\mu^*]/V_a^*)^j}.
\label{Eq:GammaDistributedCharacteristicEquation}
\end{equation}
Now, we use the relationships $V_a^* = j/\mathcal{T}$ and $\mathcal{T} = \tau^* V_a^*$ to rewrite the characteristic function as
\begin{align*}
k_1-\gamma - \lambda + k_2 \beta\frac{1}{(1+\frac{\lambda+\mu^*}{V_a^*}^2)^j} & = k_1-\gamma - \lambda + k_2 \beta\frac{1}{(1+\frac{\mathcal{T}(\lambda+\mu^*)}{ V_a^* j})^j} \\
& = k_1-\gamma - \lambda + k_2 \beta\frac{1}{(1+\frac{\tau^*(\lambda+\mu^*)}{j})^j}.
\end{align*}
Using a common denominator gives
\begin{equation}
0 = \left(k_1-\gamma-\lambda\right)\left(1+\frac{\tau^*(\lambda+\mu^*)}{j}\right)^j+k_2\beta
\label{Eq:CommonGammaDistributedCharacteristicEquation}
\end{equation}
Now, we consider multiple cases for the parameter $j$. If $j \in \mathbb{N}$, then \eqref{Eq:CommonGammaDistributedCharacteristicEquation} is a polynominal of degree $j+1$, with $j+1$ roots. This is markedly different than the generic distributed DDE, as the characteristic equation~\eqref{Eq:CharacteristicEquationGeneric} is typically a transcendental functions of $\lambda$ with infinitely many characteristic values. Now, with $j =n/m \in \mathbb{Q}$, we can rearrange \eqref{Eq:CommonGammaDistributedCharacteristicEquation} to
\begin{equation*}
 \left(k_1-\gamma-\lambda\right)\left(1+\frac{\tau^*(\lambda+\mu^*)}{j}\right)^j= -k_2\beta,
\end{equation*}
and raising both sides of the equality to the power $m$ gives
\begin{equation}
0 = (k_1-\gamma-\lambda)^m(1+\frac{\tau^*(\lambda+\mu^*)}{j})^n + \left(-k_2\beta \right)^m .
\label{Eq:RationalGammaCharacterisicEquation}
\end{equation}
Not all solutions of \eqref{Eq:RationalGammaCharacterisicEquation} will necessarily satisfy \eqref{Eq:GammaDistributedCharacteristicEquation}. However, every solution of \eqref{Eq:GammaDistributedCharacteristicEquation} will satisfy \eqref{Eq:RationalGammaCharacterisicEquation}.  Moreover,\eqref{Eq:RationalGammaCharacterisicEquation} is a polynomial with $m+n$ roots, so \eqref{Eq:GammaDistributedCharacteristicEquation} with $j = n/m \in \mathbb{Q}$ has at most $m+n$ roots. However, if the parameter $j$ is not rational, then \eqref{Eq:CommonGammaDistributedCharacteristicEquation} is once again a transcendental equation with possibly infinitely many roots.

The relationship between the number of characteristic values and the parameter $j$ leads to interesting questions. If $j \in \mathbb{N}$ increases by unit steps, then the characteristic equation gains precisely one root. However, if $j$ increases smoothly between $j$ and $j+1$, do characteristic values spring in and out of existence depending on the rationality of $j$? This question, while important, is outside the scope of the current work.

Having studied the characteristic equation of gamma distributed DDEs, we proceed to write down the gamma distributed DDE. We have parametrized the gamma distribution so that at homeostasis, the mean delay time is $\tau^*$. The variable ageing velocity must then be scaled so that at homeostasis, individuals age chronologically. Therefore, we define the scaled ageing velocity
\begin{equation}
\hat{V}_a(t) = \frac{V_a(t)}{V_a^*},
\label{Eq:ScaledAgeingVelocity}
\end{equation}
and will use $\hat{V}_a(t)$ throughout the remainder of our study. The scaled density function $g_{V_a^*}^j(a_t(\phi))$ is given by
\begin{equation*}
g_{V_a^*}^j\left( \int_{t-\phi}^t \hat{V}_a(s)\d s\right) = \frac{(V_a^*)^j}{\Gamma(j)}\left[ \int_{t-\phi}^t \hat{V}_a(s)\d s\right]^{j-1} \exp \left[-V_a^* \int_{t-\phi}^t \hat{V}_a(s)\d s \right].
\end{equation*}
By inserting $g_{V_a^*}^j(a_t(\phi))$ into equation~\eqref{Eq:AKDefinition}, we define
\begin{equation}
 A_{g}(t) =  \int_{0}^{\infty} g_{V_a^*}^j \left( \int_{t-\phi}^t \hat{V}_a(s)\d s\right) \frac{ \beta x(t-\phi) }{\hat{V}_a (t-\phi)}  \exp \left[ - \int_{t-\phi}^t \mu(x(s)) \d s\right] \d \phi.
 \label{Eq:AGammaDefinition}
\end{equation}

Then, the IVP with a state-dependent distributed DDE corresponding to equation~\eqref{Eq:MatureDDE} is
\begin{equation}
\left.
\begin{aligned}
\TimeDeriv x(t)  & =   F \left[ x(t),V_a(t)  A_{g}(t)\right] - \gamma(x(t)) x(t)\\
x(s) & = \rho(s), s \in (-\infty,t_0].
\end{aligned}
\right \}
\label{Eq:GammaDistributedDDE}
\end{equation}
As we show in Section~\ref{Sec:Examples}, equivalent models to \eqref{Eq:GammaDistributedDDE} have been used in pharmacokinetic modelling. However, these models typically take the form of finite dimensional systems of ODEs and the direct link between these ODEs with variable transit rates and \eqref{Eq:GammaDistributedDDE} has not been established previously.

\subsubsection{A Generalized Linear Chain Technique}\label{Sec:FiniteDimensionalRepresentation}

The finitely many roots of equation~\eqref{Eq:GammaDistributedCharacteristicEquation} for integer $j\in \mathbb{N}$ suggest that there is a finite dimensional representation of the DDE~\eqref{Eq:GammaDistributedDDE}. The link between gamma distributed DDEs and transit chain ODEs with constant transit rates has been known since at least \citet{Vogel1961}. The method entered into the English literature in the works of \citet{MacDonald1978} as the linear chain trick or the linear chain technique.

Just as in Section~\ref{Sec:CompactSupportDistribution}, the linear chain technique consists of replacing the convolution integral \eqref{Eq:AGammaDefinition} by the solution of a system of differential equations. To do this, we will exploit the fact that, for $j\in \mathbb{N}$,
\begin{equation}
\frac{\d}{\d x} g_b^1(x) = -b g_b^1(x) \quad \textrm{and} \quad \frac{\d}{\d x} g_b^j(x) = b[g_b^{j-1}(x)-g_b^j(x)].
\label{Eq:GammaDerivatives}
\end{equation}
The linear chain technique has been used extensively in pharmacology to model delayed drug absorption and action. However, typical applications of the technique require that transition rates between compartments are constant and identical. \citet{deSouza2017} developed an adapted linear chain technique that allows for variable transition rates by rescaling time in a non-linear way. This non-linear time rescaling leads to difficulties in establishing a link between time rescaled simulations and time series patient data \citep{deSouza2017}. Here, we provide an alternative technique that allows for variable transition rates between compartments without rescaling time.

We first show how to write \eqref{Eq:AGammaDefinition} as the solution of a system of ordinary differential equations.
\begin{lemma}\label{Lemma:GammaDerivative}
For $j\in \mathbb{N}$, $A_g(t) = x_j(t)$ where $\{ x_i(t) \}_{i=1}^j$ satisfies
\begin{equation}
\begin{aligned}
\TimeDeriv x_1(t) & =  \frac{ \beta x(t) }{\hat{V}_a (t)} - V_a(t)x_1(t) -\mu(x(t)) x_1(t)\\ \notag
\TimeDeriv x_i(t) & = V_a(t)\left[ x_{i-1}(t) - x_{i}(t)\right] - \mu(x(t)) x_i(t)\quad \textrm{for} \quad i = 2,3,...,j.  \notag
\end{aligned}
\label{Eq:TransitChainODEGeneral}
\end{equation}
\end{lemma}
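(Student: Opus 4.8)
The plan is to prove $A_g(t)=x_j(t)$ by defining, for each $i=1,2,\dots,j$, the partial convolution
\begin{equation*}
x_i(t) := \int_0^{\infty} g_{V_a^*}^i\!\left( \int_{t-\phi}^t \hat{V}_a(s)\d s\right) \frac{ \beta x(t-\phi) }{\hat{V}_a (t-\phi)}  \exp \left[ - \int_{t-\phi}^t \mu(x(s)) \d s\right] \d \phi,
\end{equation*}
so that the top index $i=j$ reproduces exactly $A_g(t)$ from \eqref{Eq:AGammaDefinition}. The strategy then mirrors the proof of Lemma~\ref{Lemma:UniformDerivative}: differentiate each $x_i(t)$ with respect to $t$ and show the derivatives close up into the claimed triangular system. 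The essential structural input is the pair of identities \eqref{Eq:GammaDerivatives} for the gamma densities $g_b^i$, but here the argument of the density is the accumulated age $\int_{t-\phi}^t \hat{V}_a(s)\d s$ rather than a bare time variable, so the chain rule will produce an extra factor when differentiating the density in $t$.

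First I would compute $\TimeDeriv x_i(t)$ by differentiating under the integral sign. The integrand depends on $t$ through three places: the density argument, the exponential survival factor, and (for the $x(t-\phi)$ term) the delayed state. Differentiating the density $g_{V_a^*}^i\!\left(\int_{t-\phi}^t \hat{V}_a(s)\d s\right)$ in $t$ gives $(g_{V_a^*}^i)'$ times $\partial_t\!\int_{t-\phi}^t \hat{V}_a(s)\d s = \hat{V}_a(t)$; combined with \eqref{Eq:ScaledAgeingVelocity}, namely $V_a(t)=V_a^*\hat{V}_a(t)$, and the recursion \eqref{Eq:GammaDerivatives} with $b=V_a^*$, this converts $(g_{V_a^*}^i)'\hat{V}_a(t)$ into $V_a(t)\,[\,g_{V_a^*}^{i-1}-g_{V_a^*}^i\,]$ for $i\geq 2$, which is precisely the source of the $V_a(t)[x_{i-1}-x_i]$ coupling. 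Differentiating the exponential factor contributes the $-\mu(x(t))x_i(t)$ term uniformly in $i$. The base case $i=1$ uses the first identity in \eqref{Eq:GammaDerivatives}, $\tfrac{\d}{\d x}g_b^1=-b\,g_b^1$, which kills the ``incoming'' term and instead produces a boundary contribution.

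The main obstacle I expect is correctly generating the influx term $\beta x(t)/\hat{V}_a(t)$ in the $i=1$ equation, since this is not a volume term but a boundary contribution at $\phi=0$. For $i=1$ the density $g_{V_a^*}^1(u)=V_a^*e^{-V_a^* u}$ does not vanish at $u=0$, so when I differentiate the $\phi$-integral the endpoint $\phi=0$ survives; I would handle this cleanly by differentiating the density term via $(g_{V_a^*}^1)'\hat{V}_a(t)=-V_a(t)\,g_{V_a^*}^1$ and then rewriting $-V_a(t)\,g_{V_a^*}^1\cdots = \tfrac{\d}{\d\phi}\big(g_{V_a^*}^1(\int_{t-\phi}^t\hat{V}_a)\big)\cdots$ so that an integration by parts (exactly as in Lemma~\ref{Lemma:UniformDerivative}) extracts the $\phi=0$ boundary term. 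At $\phi=0$ the accumulated age integral is zero, so $g_{V_a^*}^1=V_a^*$, $x(t-\phi)=x(t)$, $\hat{V}_a(t-\phi)=\hat{V}_a(t)$, and the survival exponential is $1$; the boundary value is therefore $V_a^*\cdot \beta x(t)/(V_a^*\hat{V}_a(t)) = \beta x(t)/\hat{V}_a(t)$, matching the stated $x_1$ equation. I would also note the $\phi\to\infty$ boundary term vanishes because the gamma density decays exponentially while $x$ and $\hat{V}_a$ are controlled, and that for $i\geq 2$ the analogous boundary term vanishes since $g_{V_a^*}^i(0)=0$, leaving only the volume coupling. Assembling the base case and the recursive cases yields the system claimed in the lemma, completing the proof.
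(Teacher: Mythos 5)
Your strategy is essentially the paper's: define the partial convolutions $x_i(t)$ for $i=1,\dots,j$, identify $x_j$ with $A_g$, differentiate, convert the derivative of the gamma density via \eqref{Eq:GammaDerivatives} into the coupling $V_a(t)[x_{i-1}(t)-x_i(t)]$, and obtain the influx as a zero-age boundary contribution that survives only for $i=1$ because $g_{V_a^*}^i(0)=0$ for $i\geq 2$. The difference is one of coordinates, and it is where your write-up goes wrong. The paper first changes variables to birth time, defining $x_i(t)=\int_{-\infty}^{t}g_{V_a^*}^i(a(t-\phi))\,\frac{\beta x(\phi)}{V_a(\phi)}\exp[-\int_{\phi}^{t}\mu(x(s))\,\d s]\,\d\phi$ as in \eqref{Eq:TransitChainDef}; then the integrand depends on $t$ only through upper limits of integrals, so Leibniz's rule directly yields the boundary term at the moving endpoint $\phi=t$ together with the clean factors $(g_{V_a^*}^i)'\,\hat{V}_a(t)$ and $-\mu(x(t))$, and no integration by parts is needed. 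You stay in the delay variable $\phi\in[0,\infty)$, where those clean factors are \emph{not} what partial differentiation in $t$ produces: at fixed $\phi$, $\partial_t\int_{t-\phi}^{t}\hat{V}_a(s)\,\d s=\hat{V}_a(t)-\hat{V}_a(t-\phi)$, the survival exponential contributes $-\mu(x(t))+\mu(x(t-\phi))$, and the factor $\beta x(t-\phi)/\hat{V}_a(t-\phi)$ has a $t$-derivative you never account for. Likewise your rewriting $-V_a(t)\,g_{V_a^*}^1(\cdot)=\frac{\d}{\d\phi}\,g_{V_a^*}^1(\cdot)$ is false pointwise: the $\phi$-derivative equals $-V_a(t-\phi)\,g_{V_a^*}^1(\cdot)$. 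What is true, and what your proof actually needs, is the Lemma~\ref{Lemma:UniformDerivative}-type identity $(\partial_t+\partial_\phi)(\textrm{integrand})=\hat{V}_a(t)\,(g_{V_a^*}^i)'(\cdot)\,(\textrm{rest})-\mu(x(t))\,(\textrm{integrand})$; writing $\partial_t=(\partial_t+\partial_\phi)-\partial_\phi$ and integrating the total $\phi$-derivative produces, in one stroke, the boundary term and exactly the clean factors you quoted. Your ledger of final terms is correct, but as described the coupling terms are justified by derivative claims that are individually wrong, while the integration by parts is invoked separately only to extract the boundary term; the regrouping has to be spelled out (or you should simply change variables to birth time, as the paper does, which removes the issue entirely).

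A second, smaller point: with the definition of $x_i$ you wrote (denominator $\hat{V}_a(t-\phi)$, copied verbatim from \eqref{Eq:AGammaDefinition}), the $\phi=0$ boundary value is $g_{V_a^*}^1(0)\,\beta x(t)/\hat{V}_a(t)=V_a^*\,\beta x(t)/\hat{V}_a(t)$, not $\beta x(t)/\hat{V}_a(t)$; in your evaluation you silently replaced $\hat{V}_a(t)$ by $V_a(t)=V_a^*\hat{V}_a(t)$. You have in fact reproduced an inconsistency present in the paper itself: \eqref{Eq:AGammaDefinition} carries $\hat{V}_a(t-\phi)$ in the denominator while the paper's own \eqref{Eq:TransitChainDef} carries $V_a(\phi)$, and the two differ by the constant $V_a^*$. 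It is the $V_a$ normalization that makes $A_g=x_j$ and the stated ODE system hold simultaneously, so define your $x_i$ with $V_a(t-\phi)$ in the denominator and your boundary value then comes out as claimed.
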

\begin{proof}\phantom{\qedhere}
We first note that
\begin{equation*}
g_{V_a^*}^i\left( \int_{t}^t \hat{V}_a(s)\d s \right) = \left \{
\begin{array}{lll}
V_a^* & \textrm{if} & i =1 \\
0 & \textrm{if} & i  = 2,3,...,j. \\
\end{array}
\right.
\end{equation*}
Then using \eqref{Eq:GammaDerivatives} and \eqref{Eq:CharacteristicLinesEq}, the chain and Leibniz rules show that
\begin{equation*}
\TimeDeriv g_{V_a^*}^1 \left( \int_{\phi}^t \hat{V}_a(s)\d s\right)  =  -V_a^*\hat{V}_a(t) g_{V_a^*}^1 \left( \int_{\phi}^t \hat{V}_a(s)\d s\right) =  -V_a(t) g_{V_a^*}^1\left( \int_{\phi}^t \hat{V}_a(s)\d s\right)
\end{equation*}
while, for $ i = 2,3,4,...$,
\begin{equation*}
\begin{aligned}
\TimeDeriv g_{V_a^*}^i \left( \int_{\phi}^t \hat{V}_a(s)\d s \right)  & =  V_a^* \hat{V}_a(t) \left[ g_{V_a^*}^{i-1}\left( \int_{\phi}^t \hat{V}_a(s)\d s \right) -g_{V_a^*}^i\left( \int_{\phi}^t \hat{V}_a(s)\d s \right) \right]\\
& =  V_a(t) \left[ g_{V_a^*}^{i-1} \left( \int_{\phi}^t \hat{V}_a(s)\d s\right)-g_{V_a^*}^i\left( \int_{\phi }^t \hat{V}_a(s)\d s\right) \right].
\end{aligned}
\end{equation*}
Now  we define,
\begin{equation*}
a(x) = \int_{t-x}^t \hat{V}_a(s) \d s
\end{equation*}
and, for $i=1,2,...,j$,
\begin{equation}
x_i(t) = \int_{-\infty}^{t} g_{V_a^*}^i (a(t-\phi)) \frac{ \beta x(\phi) }{V_a (\phi)}  \exp \left[ - \int_{\phi}^t \mu(x(s)) \d s\right] \d \phi,
\label{Eq:TransitChainDef}
\end{equation}
and note that, after making the change of variable $u = t-\phi$ in $A_g(t)$,
\begin{equation*}
x_j(t) = \int_{-\infty}^{t} g_{V_a^*}^j (a(t-\phi)) \frac{ \beta x[\phi] }{V_a (\phi)}  \exp \left[ - \int_{\phi}^t \mu(x(s)) \d s\right] \d \phi = A_g(t).
\end{equation*}
Now, by differentiating \eqref{Eq:TransitChainDef} using the Leibniz rules, the transit chain $ x_i(t)$ satisfies the following system of equations
\begin{align} \notag
\TimeDeriv x_1(t) & =  \frac{ \beta x(t) }{\hat{V}_a (t)} - V_a(t)x_1(t) -\mu(x(t)) x_1(t)\\ \tag*{\qed}
\TimeDeriv x_i(t) & = V_a(t)\left[ x_{i-1}(t) - x_{i}(t)\right] - \mu(x(t)) x_i(t)\quad \textrm{for} \quad i = 2,3,...,j.
\end{align}
\end{proof}

Importantly, Lemma \ref{Lemma:GammaDerivative} ensures that
\begin{equation}
V_a(t) A_{g}(t) = V_a(t) x_j(t).
\label{Eq:ClosedSystemGammaDistribution}
\end{equation}
Now, we can use the relationship between equations~\eqref{Eq:TransitChainDef} and \eqref{Eq:ClosedSystemGammaDistribution} to establish the following theorem:

\begin{theorem}[Finite Dimensional Representation]\label{Theorem:FiniteDimensionRepresentation}
The distributed state dependent DDE~\eqref{Eq:GammaDistributedDDE} with $j \in \mathbb{N}$ is equivalent to the finite dimensional transit compartment ODE system given by
\begin{equation}
\left.
\begin{aligned}
\TimeDeriv x(t) & = F(x(t),V_a(t)x_j(t) )- \gamma(x(t)) x(t) \\[0.2cm]
\TimeDeriv x_1(t) & =   \frac{ \beta x(t) }{\hat{V}_a (t)} - V_a(t)x_1(t)-\mu(x(t)) x_1(t) \\[0.2cm]
\TimeDeriv x_i(t) & =  V_a(t)\left[ x_{i-1}(t) - x_{i}(t)\right] -\mu(x(t)) x_i(t) \quad \textrm{for} \quad i = 2,3,...,j.
\end{aligned}
\right \}
\label{Eq:GammaEquivalentODE}
\end{equation}
\end{theorem}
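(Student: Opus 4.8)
The plan is to treat this theorem as the immediate payoff of Lemma~\ref{Lemma:GammaDerivative}, so that the content reduces to a substitution together with a careful matching of initial data --- precisely the subtlety flagged at the start of this section and already handled for the uniform case. I would establish equivalence in both directions: that every solution of the distributed DDE~\eqref{Eq:GammaDistributedDDE} induces a solution of the ODE system~\eqref{Eq:GammaEquivalentODE}, and conversely that every solution of~\eqref{Eq:GammaEquivalentODE} whose auxiliary initial data is chosen consistently projects onto a solution of~\eqref{Eq:GammaDistributedDDE}.

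For the forward direction, suppose $x(t)$ solves~\eqref{Eq:GammaDistributedDDE} with history $\rho$. I would define the transit variables $x_i(t)$, $i=1,\dots,j$, by the convolution integrals~\eqref{Eq:TransitChainDef}, extending $x(s)=\rho(s)$ for $s\leq t_0$. Lemma~\ref{Lemma:GammaDerivative} then guarantees that these $x_i(t)$ satisfy the transit chain ODEs appearing in~\eqref{Eq:GammaEquivalentODE} and, crucially, that $x_j(t)=A_g(t)$. Substituting the closed identity~\eqref{Eq:ClosedSystemGammaDistribution}, namely $V_a(t)A_g(t)=V_a(t)x_j(t)$, into the scalar equation for $x(t)$ replaces the argument of $F$ by $V_a(t)x_j(t)$, which is exactly the first line of~\eqref{Eq:GammaEquivalentODE}. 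The initial data for the ODE system is then read off as $x_i(t_0)$ given by evaluating~\eqref{Eq:TransitChainDef} at $t=t_0$ against the history $\rho$.

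The delicate, and to my mind the hardest, step is the converse: recovering the distributed DDE from a solution of the finite ODE system. I would fix initial data $x_i(t_0)$ for~\eqref{Eq:GammaEquivalentODE} matching~\eqref{Eq:TransitChainDef} at $t_0$, solve the ODE system to obtain $(x,x_1,\dots,x_j)$, and then introduce integral-defined variables $\tilde{x}_i(t)$ built from~\eqref{Eq:TransitChainDef} using this same trajectory $x(t)$. By Lemma~\ref{Lemma:GammaDerivative} the $\tilde{x}_i$ satisfy the transit chain ODEs with coefficients $V_a(t)$ and $\mu(x(t))$ fixed by the shared $x(t)$, while the ODE-generated $x_i$ satisfy the identical system, which is linear (indeed triangular) in the $x_i$, with the same coefficients and, by construction, the same values at $t_0$. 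Uniqueness for this linear subsystem forces $x_i(t)=\tilde{x}_i(t)$ for all $t>t_0$, so in particular $x_j(t)=\tilde{x}_j(t)=A_g(t)$ and the scalar equation is then exactly~\eqref{Eq:GammaDistributedDDE}. The main obstacle is thus not the algebra of the substitution but verifying that the auxiliary initial conditions correctly encode the history; as in the uniform case~\eqref{Eq:YICCondition}, the matching condition requiring $x_i(t_0)$ to equal the right-hand side of~\eqref{Eq:TransitChainDef} at $t_0$ is both necessary and sufficient, and equivalence fails for inconsistent initial data.
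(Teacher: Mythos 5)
Your proposal takes essentially the same route as the paper: Lemma~\ref{Lemma:GammaDerivative} does the analytic work (the transit ODEs hold and $x_j(t)=A_g(t)$, i.e.\ \eqref{Eq:ClosedSystemGammaDistribution}), and the proof then reduces to matching initial data through the integrals \eqref{Eq:TransitChainDef}. Your converse direction is in fact more explicit than the paper's: the observation that, once the trajectory $x(t)$ is fixed, the transit subsystem is linear and triangular in $(x_1,\dots,x_j)$, so that uniqueness forces the ODE-generated variables to coincide with the integral-defined ones, is left implicit in the paper and is worth spelling out.

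The one genuine divergence is your closing claim that ``equivalence fails for inconsistent initial data.'' Inconsistency is only meaningful relative to a \emph{fixed} history. The paper completes the ODE-to-DDE direction differently: in the special case of constant $\mu=\mu^*$ with geometrically related initial values one may take a constant history $\rho(s)=x_1(0)$, and for \emph{arbitrary} ODE initial conditions $x_i(0)=\alpha_i$ it invokes a construction in the style of \citet{Cassidy2018} to produce one of the infinitely many history functions $\rho$ whose integrals \eqref{Eq:TransitChainDef} reproduce the prescribed $\alpha_i$. That step is what makes the equivalence unconditional: every ODE solution, whatever its initial data, is the projection of some DDE solution. Under your formulation, ODE solutions with ``inconsistent'' data would be orphaned, and the two systems would only be equivalent on the set of consistent initial data --- a strictly weaker statement than the theorem as the paper intends it. You should either add (or cite) the history-construction step for arbitrary $\alpha_i$, or restate what you prove as equivalence for suitably matched initial data, which is the reading that the uniform-case condition \eqref{Eq:YICCondition} suggests but is not what this theorem asserts.
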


\begin{proof}
Lemma~\ref{Lemma:GammaDerivative} ensures that the differential equations are equivalent. Therefore, we need only construct appropriate initial data for the distributed DDE and ODE formulation. For a history function $\rho(s)$ of \eqref{Eq:GammaDistributedDDE}, we set, for $i = 1,2,...,j,$
\begin{equation}
x_i(0) = \int_{-\infty}^{0} g_{V_a^*}^i (a(-\phi)) \frac{ \beta \rho(\phi) }{V_a (\phi)}  \exp \left[ - \int_{\phi}^t \mu(\rho(s)) \d s\right] \d \phi.
\end{equation}
If $\mu(s) = \mu^*$  is constant and the initial conditions satisfy
\begin{equation*}
x_i(0) = \left(\frac{V_a^*}{V_a^*+\mu^*}\right)^i x_1(0),
\end{equation*}
it is simple to choose $\rho(s) = x_1(0)$. However, in the more general case with $\mu(t) \neq \mu^* $ and arbitrary ODE initial conditions $x_i(0)= \alpha_i$ of \eqref{Eq:TransitChainODEGeneral}, we can use a similar method to \citet{Cassidy2018} to construct one of the infinitely many appropriate history functions.
\end{proof}

A form of the expression for the variable age transit chain in equation~\eqref{Eq:TransitChainDef} was derived by \citet{Krzyzanski2011} to study the equivalence between lifespan and transit compartment models in pharmacodynamics. However, the derivation did not include the underlying age structured PDE and was specific to the gamma distribution.  \citet{Gurney1986} derived a similar expression for the density of individuals progressing through a specific stage of maturation from a balance equation. However, they did not explicitly formulate the underlying DDE nor did they derive the correct initial conditions for each of the transit compartments. Consequently, they did not show equivalence between the transit compartment formulation and the DDE.  

\begin{remark}[Recipe for equivalency between ODEs and gamma distributed DDEs]\label{Remark:ODEtoDDERecipe}
We note that the finite dimensional representation of~\eqref{Eq:GammaDistributedDDE} with $j\in \mathbb{N}$ includes a transit compartment chain. Due to the equivalence between \eqref{Eq:GammaDistributedDDE} and \eqref{Eq:GammaEquivalentODE}, we are able to identify the ingredients needed to transform a transit compartment ODE such as \eqref{Eq:GammaEquivalentODE} into a DDE such as \eqref{Eq:GammaDistributedDDE}. We first consider
\begin{equation*}
\TimeDeriv x_1(t)  =   \frac{ \beta x(t) }{\hat{V}_a (t)} - V_a(t)x_1(t)-\mu(x(t)) x_1(t).
\end{equation*}
From the equation for $x_1(t)$, we can easily identify the ratio $\beta x(t)/\hat{V}_a (t)$ as the rate at which individuals in the 1st compartment are created. Next, by considering the rate at which individuals enter the second compartment,
\begin{equation}
\begin{aligned}
\TimeDeriv x_1(t) & =   \frac{ \beta x(t) }{\hat{V}_a (t)} - V_a(t)x_1(t)-\mu(x(t)) x_1(t) \\[0.2cm]
\TimeDeriv x_2(t) & =  V_a(t)x_1(t) - V_a(t)x_{2}(t) -\mu(x(t)) x_2(t),
\end{aligned}
\end{equation}
we find the (possibly variable) transit rate between compartments. Then, a process of elimination immediately yields the mortality rate $\mu(x(t))$ (if $\mu(x(t))<0$, then population growth rather than decay is occurring through the transit chain). The creation and transit rates also yield the homeostatic ageing rate via \eqref{Eq:ScaledAgeingVelocity}. Further inspection of \eqref{Eq:AGammaDefinition} shows that these rates are all that are needed to transform the transit chain ODE to a distributed DDE.
\end{remark}

We note that the classic linear chain technique (see \citet{Smith2011a}) is a special case of Remark~\ref{Remark:ODEtoDDERecipe} where the ageing velocity, $V_a(t)$, is constant.

\section{Examples From Hematopoiesis}\label{Sec:Examples}

Sometimes, analysis of distributed DDEs is more tractable and simpler than that of a high dimensional equivalent ODE system. For example, by rescaling time, \citet{deSouza2017} converted Quartino's  ODE transit compartment model of granulopoiesis into a distributed DDE \citep{Quartino2014}. The distributed DDE formulation proved to be much more analytically tractable than the ODE case, and was used to show the positivity of solutions and establish the local stability of equilibrium solutions.

However, due to the lack of a general numerical algorithm, simulation of distributed DDEs must be handled on a case by case basis. Simulation of transit compartment ODEs is routine in many programming languages and can be used for the calibration of models to existing data. Once calibrated, mathematical models can be simulated and used in a predictive manner. Consequently, by converting models between the equivalent distributed DDE or ODE formulations, researchers can use the form of the model that is most suitable to their needs.

The hematopoietic system controls blood cell production and, through tight cytokine control, is able to quickly respond to challenges, including infection and blood loss. Cytokines control hematopoietic output by varying effective proliferation and maturation rates in each hematopoietic lineage. As cells are not produced instantaneously, there is necessarily a delay between cytokine signal and production response. Mathematical models have been used to understand the complex dynamics observed in so-called dynamical diseases since the 1970s \citep{Glass2015,Rubinow1975,Mackey1978}. Existing mathematical models of hematopoiesis have included discrete, distributed and state-dependent DDEs \citep{Mahaffy1998,Colijn2005a,Crauste2007,Craig2016,Hearn1998a} as well as transit compartment models \citep{Friberg2002,VonSchulthess1982,Krzyzanski2010}.

Here, we use the equivalence between state dependent distributed DDEs and ODE transit compartment models derived in Section~\ref{Sec:FiniteDimensionalRepresentation} to convert two previously published ODE models of hematopoietic cell production to their equivalent state-dependent distributed DDE. The ODE models specify the entrance rate of individuals into the maturation compartment and the maturation speed, $V_a(t)$, which allows for the calculation the birth rate of immature individuals. As these models involve more than one population, the birth rate $\beta$ is no longer constant but is a function of other populations in the model.

In the first example, we show how a model of reticulocyte production can be reduced to a renewal equation whose dynamics are completely characterized by a simple system of ordinary differential equations.

In the second example, we extend the framework of Section~\ref{Sec:FiniteDimensionalRepresentation} to include non-identical transitions between ageing populations and a variable transition rate. This example shows how the state dependent distributed DDE framework addresses the inability of the linear chain technique to model dynamic ageing processes.

\subsection{P\'{e}rez-Ruixo Model of Reticulocyte Production}\label{Sec:PerezModel}

\citet{Perez-Ruixo2008} studied the effect of recombinant human erythropoietin (EPO) on red blood cell precursors using a mathematical model. EPO is the protein responsible for controlling production of red blood cells and their precursors. The model arises from pharmacokinetic and pharmacodynamic data from patients receiving one dose of exogenous EPO.  EPO was modelled through an open two compartment model of exogenous dose absorption and homoeostatic endogenous production rate, $k_{EPO}$, and the blood serum level ($BSL$). The bioavailable exogenous EPO was modelled as a dose dependent hyperbolic function satisfying
\begin{equation*}
F = F_0+ \frac{E_{max}\textrm{Dose}}{ED_{50}+\textrm{Dose}},
\end{equation*}
where $\textrm{Dose}$ is the amount of EPO administered. Exogenous EPO was absorbed through a dual absorption model into the depot and central compartments. The duration of first order absorption into the depot and central compartments are given by $D_1$ and $D_2$, respectively. A fraction of the bioavailable exogenous EPO, $f_r$, was absorbed into the depot compartment before entering the central compartment at rate $k_a$. The depot concentration of EPO follows
\begin{equation}
\TimeDeriv A_1(t) = \left \{
\begin{array}{lll}
\frac{\textrm{Dose} f_r F}{D_1} - k_a A_1 & \textrm{if} & t \leq D_1\\
-k_a A_1 & \textrm{if} & t > D_1,
\end{array}
\right.
\label{Eq:PerezA1Equation}
\end{equation}
The remaining exogenous EPO, $(1-f_r)F$ enters the central compartment following a lag time $t_{\textrm{lag}2}$ and is cleared linearly at the rate $k_{20}$. The volume of the central compartment is $V_1$. The dynamics of exogenous EPO in the central compartment are given by
\begin{equation}
\TimeDeriv A_2(t) = \left \{
\begin{array}{lll}
\frac{\textrm{Dose}(1- f_r) F}{D_2} + k_a A_1(t) + k_{32}A_3(t)- k_{23}A_2(t)  & & \\
\quad  - k_{20}A_2(t) + k_{epo} - \frac{V_{max}A_2(t)/V_2}{K_M+A_2(t)/V_2} & \textrm{if} & t_{\textrm{lag}2}\leq t \leq D_2 \\
k_{epo} - \frac{V_{max}A_2(t)/V_1}{K_M+A_2(t)/V_1} & \textrm{if} & t > D_2, t < t_{\textrm{lag}2},
\end{array}
\right.
\label{Eq:PerezA2Equation}
\end{equation}
Finally, EPO enters the peripheral compartment from -and returns to- the central compartment linearly, so
\begin{equation}
\TimeDeriv A_3(t) = k_{23}A_2(t)-k_{32}A_3(t).
\label{Eq:PerezA3Equation}
\end{equation}
The total bioavailable EPO is given by
\begin{equation*}
C(t) = BSL + A_2(t)/V_1.
\end{equation*}
\citet{Perez-Ruixo2008} considered 4 different pharmacodynamics models of erythrocyte response to exogenous EPO (titled the A,B,C and D models). In each of the 4 different pharmacodynamic models, the EPO dynamics are unchanged and described by equations~\eqref{Eq:PerezA1Equation}, \eqref{Eq:PerezA2Equation} and \eqref{Eq:PerezA3Equation}.

Here, we describe the ``B'' model from \citet{Perez-Ruixo2008}. Model ``B'' divides the erythrocyte progenitors, $P(t)$, into $N_P$ compartments further subdivided into two distinct populations; EPO only affects the growth rate of the first population. Thus, the first $N_P/2$ compartments constitute the EPO sensitive population. Progression through these $N_P$ compartments represents the ageing process of the progenitor cells. Once erythrocyte progenitors have reached maturity, they progress into the reticulocyte population. Once again, the maturation process of reticulocytes is modelled through a series of $N_R$ transit compartments that are not sensitive to EPO. In this manner, the \citet{Perez-Ruixo2008} model uses a concatenation of transit compartments to model the separate ageing processes of  reticulocytes.

%The final distributed DDE includes a joint multivariate distribution representing the concatenation of distinct ageing processes. Unlike univariate distributions, the link between density function and multivariate distribution is more complicated. To model the random variable representing the total time between entering the erythrocyte precursor pool and maturing to a red blood cell, we must use copulas to describe the interdependence of time in the precursor and reticulocyte pools.

The P\'{e}rez-Ruixo ``B'' model of erythrocyte progenitor and reticulocyte production is
\begin{equation}
\left.
\begin{aligned}
\TimeDeriv P_1(t) & = k_{in}- \frac{S_{max}C(t)}{SC_{50}+C(t)}\frac{N_P}{T_P}P_1(t) \\[0.2cm]
\TimeDeriv P_i(t) & = \frac{S_{max}C(t)}{SC_{50}+C(t)} \frac{N_P}{T_P}\left[ P_{i-1}(t)-P_i(t)\right] \quad \textrm{for} \quad i = 2,3,...,N_P/2 \\[0.2cm]
\TimeDeriv P_{N_P/2+1}(t) & = \frac{S_{max}C(t)}{SC_{50}+C(t)}\frac{N_P}{T_P} P_{N_P/2}(t) - \frac{N_P}{T_P} P_{N_P/2+1}(t) \\[0.2cm]
\TimeDeriv P_i(t) & = \frac{N_P}{T_P}\left[ P_{i-1}(t)-P_i(t)\right] \quad \textrm{for} \quad i = N_P/2+2,...,N_P. \\[0.2cm]
\TimeDeriv R_1(t) & = \frac{N_P}{T_P}P_{N_P}(t) - \frac{N_R}{T_R}R_1(t) \\[0.2cm]
\TimeDeriv R_i(t) & =  \frac{N_R}{T_R}\left[ R_{i-1}(t)-R_i(t)\right] \quad \textrm{for} \quad i = 2,3,...N_R.
\end{aligned}
\right \}
\label{Eq:PerezModel}
\end{equation}
By identifying the ingredients necessary from Remark \ref{Remark:ODEtoDDERecipe}, we will show how the distributed DDE framework from Section~\ref{Sec:FiniteDimensionalRepresentation} can account for these separate ageing processes with distinct ageing velocities. Accounting for multiple ageing processes is not possible by rescaling time so approach of \citet{deSouza2017} cannot be generalized to this case.

The most immature erythrocyte progenitors are modelled by $P_1(t)$ and are created from multipotent progenitors differentiating into the erythrocyte lineage at a constant rate $k_{in}$. Transit between the first $N_P/2$ compartments occurs at the variable rate
\begin{equation*}
V_e(t)= \frac{S_{max}C(t)}{SC_{50}+C(t)} \frac{N_P}{T_P} \quad \textrm{with} \quad V_e^* = \frac{S_{max}BSL}{SC_{50}+BSL} \frac{N_P}{T_P}.
\end{equation*}

Using \eqref{Eq:TransitChainDef}, we define $\hat{V}_e(t) = V_e(t)/V_e^*$, so the birth rate of precursor cells into $P_2(t)$ is
%\ins{Why don't we do this for $P_1(t)$?}
\begin{equation*}
 V_e(t) P_1(t) =  \displaystyle \frac{\beta_e(t)}{\hat{V}_{e}(t)}.
\end{equation*}
Further, we see that the only removal of cells from the compartment model is due to transition to later compartments. Therefore, $\mu(t) =0$, and we have identified all the ingredients necessary in  Remark~\ref{Remark:ODEtoDDERecipe}. Therefore, for $ i = 2,3,...N_P/2$,
\begin{equation}
P_i(t) = \int_{-\infty}^t  \frac{V_e(\phi)}{V_e^*} \ P_1(\phi)g_{V_e^*}^{i}\left[\int_{\phi}^t \hat{V}_e(s) \d s\right]\d \phi.
\label{Eq:ErythroProgenitorExpression1}
\end{equation}
The $N_P/2+1$st compartment satisfies
\begin{equation*}
\TimeDeriv P_{N_P/2+1}(t) = V_e(t) P_{N_P/2}(t) - \frac{N_P}{T_P} P_{N_P/2+1}(t).
\end{equation*}
Erythrocyte progenitors enter the first non-EPO sensitive ageing compartment, $P_{N_P/2+1}(t)$, with appearance rate
\begin{equation*}
 \frac{\tilde{\beta}_e(t)}{V_p(t)} = V_e(t)P_{N/2}(t),
\end{equation*}
and then progress through the remaining $N_P/2$ compartments at a constant rate $V_p(t) = V_p^* = N_P/T_P$. Once again, we note that there is no removal of cells in any of the $N_P/2$ compartments, so $\mu(t)= 0$. Further, since the ageing velocity is constant, $\hat{V}_p^* = 1$. Therefore, a simple application of Remark~\ref{Remark:ODEtoDDERecipe} for constant ageing velocity, and using \eqref{Eq:ErythroProgenitorExpression1} gives
\begin{align}\label{Eq:PerezProgenitors}\notag
P_{N_P}(t) & = \int_0^{\infty} \frac{\tilde{\beta}_e(t-\theta)}{N_P/T_P}g_{N_P/T_P}^{N_P/2}(\theta) \d \theta  = \int_{-\infty}^{t} \frac{\tilde{\beta}_e(\theta)}{N_P/T_P}g_{N_P/T_P}^{N_P/2}(t-\theta) \d \theta \\[0.25cm]
&= \int_{-\infty}^{t} \left[\frac{V_e(\theta)}{N_P/T_P} \int_{-\infty}^{\theta}   V_e(\phi) P_1(\phi)g_{V_e^*}^{i-1}\left( \int_{\phi}^{\theta} \hat{V}_e(s) \d s\right)\d \phi \right]g_{N_P/T_P}^{N_P/2}(t-\theta) \d \theta.
\end{align}
Mature erythrocyte precursors enter into the most immature reticulocyte compartment, $R_1(t)$. Given \eqref{Eq:PerezProgenitors}, the differential equation for $R_1(t)$ becomes
\begin{equation*}
\begin{aligned}
\TimeDeriv R_1(t) & = \frac{N_P}{T_P}\underbrace{ \int_{-\infty}^{t} \left[\frac{V_e(\theta)}{N_P/T_P} \int_{-\infty}^{\theta}   V_e(\phi) P_1(\phi)g_{V_e^*}^{i-1}\left( \int_{\phi}^{\theta} \hat{V}_e(s) \d s\right) \d \phi \right]g_{N_P/T_P}^{N_P/2}(t-\theta) \d \theta}_{P_{N_P}(t)} \\
 & \qquad {} - \frac{N_R}{T_R}R_1.
\end{aligned}
\end{equation*}
%The total number of reticulocytes at time $t$, RET$(t)$, is given by
%\begin{equation}
%\textrm{RET}(t) = \displaystyle \sum_{i=1}^{N_R} R_i(t),
%\end{equation}
%where $R_i(t)$ satisfies equation~\eqref{Eq:PerezModel}.
Hence, the P\'{e}rez-Ruixo ``B'' model of reticulocyte production is equivalent to
\begin{equation*}
\begin{aligned}
C(t) & = BSL + A_2(t)/V_1 \\
\TimeDeriv P_1(t) & = k_{in}- \frac{S_{max}C(t)}{SC_{50}+C(t)}\frac{N_P}{T_P}P_1(t) \\[0.2cm]
\TimeDeriv R_1(t) & =  \frac{N_P}{T_P}\int_{-\infty}^{t} \left[\frac{V_e(\theta)}{N_P/T_P} \int_{-\infty}^{\theta}   V_e(\phi) P_1(\phi)g_{V_e^*}^{N_p/2}\left( \int_{\phi}^{\theta} \hat{V}_e(s) \d s\right) \d \phi \right]g_{N_P/T_P}^{N_P/2}(t-\theta) \d \theta \\
 & \qquad {} - \frac{N_R}{T_R}R_1 \\
 \TimeDeriv R_i(t) & =  \frac{N_R}{T_R}\left[ R_{i-1}(t)-R_i(t)\right] \quad \textrm{for} \quad i = 2,3,...N_R.
% \label{Eq:ReducedPerezModel}
\end{aligned}
\end{equation*}
Finally, we can use Remark 4.5 with the constant ageing velocity $V_r(t) = V_r^* =  N_R/T_R$ to solve the transit compartment system for $R_i(t)$ to find
\begin{equation}
R_{i}(t)  = \int_{0}^{\infty} \frac{T_R}{N_R} \beta_R(\sigma)g_{N_R/T_R}^{i}(\sigma) \d \sigma,
\label{Eq:ReticulocyteRenewal}
\end{equation}
where
\begin{equation*}
\beta_R(\sigma) = \frac{N_P}{T_P}\int_{-\infty}^{\sigma} \left[\frac{V_e(\theta)}{N_P/T_P} \int_{-\infty}^{\theta}   V_e(\phi) P_1(\phi)g_{V_e^*}^{N_p/2}\left( \int_{\phi}^{\theta} \hat{V}_e(s) \d s\right) \d \phi \right]g_{N_P/T_P}^{N_P/2}(t-\theta) \d \theta. \\
\end{equation*}
Using the techniques developed in Section~\ref{Sec:FiniteDimensionalRepresentation}, we have transformed the differential equations for the transit compartments for the erythrocyte progenitors and the reticulocytes into renewal type equations given by \eqref{Eq:PerezProgenitors} and \eqref{Eq:ReticulocyteRenewal} \citep{Diekmann2017}. Since \citet{Perez-Ruixo2008} did not model reticulocyte mediated clearance of EPO, the cytokine and early progenitor dynamics are independent of the $P_{N_P}(t)$ and $R_{N_R}(t)$ concentrations. Consequently, the dynamics of equation~\eqref{Eq:PerezModel} are completely determined by the dynamics of
\begin{equation*}
\begin{aligned}
C(t) & = BSL + A_2(t)/V_1 \\
\TimeDeriv P_1(t) & = k_{in}- \frac{S_{max}C(t)}{SC_{50}+C(t)}\frac{N_P}{T_P}P_1(t),
\end{aligned}
\end{equation*}
and the EPO concentrations given by equations~\eqref{Eq:PerezA1Equation}, \eqref{Eq:PerezA2Equation}, and \eqref{Eq:PerezA3Equation}. We are now able to completely characterise the homeostatic behaviour of erythropoiesis by studying
\begin{equation}
\left.
\begin{aligned}
\TimeDeriv A_1(t) & = -k_aA_1(t) \\
\TimeDeriv A_2(t) & = k_{epo} - \frac{V_{max}A_2/V_1}{K_M+A_2/V_1} \\
\TimeDeriv A_3(t) & = k_{23}A_2(t)-k_{32}A_3(t) \\
\TimeDeriv P_1(t) & = k_{in}- \frac{S_{max}C(t)}{SC_{50}+C(t)}\frac{N_P}{T_P}P_1(t),
\end{aligned}
\right \}
\label{Eq:ErythropoiesisRenewalEquation}
\end{equation}
To ensure that the initial value problem \eqref{Eq:ErythropoiesisRenewalEquation} is equivalent to the P\'{e}rez-Ruixo model \citep{Perez-Ruixo2008}, we re-use the initial conditions for $A_1(0),A_2(0),A_3(0)$. Since $\mu=0$ and the initial conditions $P_1(0) = P_i(0) $ are constant, we can set the history function for the progenitors, $\rho_p(s)$, to be $\rho_p(s) = P_1(0)$. The same can be done for the reticulocytes with $\rho_r(s) = R_1(0)$.

We find the homeostatic concentration of EPO in the depot, central and peripheral compartments by solving
\begin{equation*}
\TimeDeriv A_1(t) = 0, \quad \TimeDeriv A_2(t) = 0, \quad \TimeDeriv A_3(t) = 0, \quad \textrm{and} \quad  \TimeDeriv P_1(t) = 0.
\end{equation*}
This yields the following homeostatic EPO concentrations
\begin{equation*}
A_1^* = 0, \quad A_2^* = \frac{V_1k_{epo}k_M}{V_{max}-k_{epo}}, \quad A_3^*= \frac{k_{23}}{k_{32}}A_2^*, \quad \textrm{and} \quad C^* = BSL+A_2^*,
\end{equation*}
while the homeostatic progenitor concentration is
\begin{equation*}
P_1^* = \frac{k_{in}(SC_{50}+C^*)}{S_{max}C^*}\frac{T_P}{N_P}.
\end{equation*}
The simplified erythropoiesis dynamics \eqref{Eq:ErythropoiesisRenewalEquation} and homeostatic concentrations lead to the following proposition:
\begin{proposition}
For positive parameter values, the homeostatic equilibrium point of equation~\eqref{Eq:PerezModel} is locally asymptotically stable.
\end{proposition}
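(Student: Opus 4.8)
The plan is to exploit the cascade (feed-forward) structure already exposed by the reduction leading to \eqref{Eq:ErythropoiesisRenewalEquation}. Since \citet{Perez-Ruixo2008} do not model reticulocyte-mediated clearance of EPO, no downstream compartment feeds back into an upstream one: the EPO compartments $A_1,A_2,A_3$ drive $P_1$, which drives the progenitor transit chain $P_2,\dots,P_{N_P}$, which in turn drives the reticulocyte chain $R_1,\dots,R_{N_R}$. Ordering the state variables in this way, the Jacobian of the full model \eqref{Eq:PerezModel} evaluated at the homeostatic equilibrium is block lower-triangular. Consequently its spectrum is the union of the spectra of the diagonal blocks, and it suffices to show that each diagonal block is a stable matrix, i.e.\ has all eigenvalues with negative real part.

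First I would dispatch the transit-chain blocks. Each compartment equation for $P_i$ ($i\geq 2$) and $R_i$ ($i\geq 2$) has the form $\TimeDeriv x_i = (\textrm{input from } x_{i-1}) - (\textrm{rate})\,x_i$, so within each chain the diagonal block is lower bidiagonal and its eigenvalues are exactly the diagonal entries. At the equilibrium these entries are $-V_e^*$ (EPO-sensitive progenitors), $-N_P/T_P$ (insensitive progenitors), and $-N_R/T_R$ (reticulocytes), all strictly negative for positive parameters. Hence every transit-chain block is stable.

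It then remains to analyse the driving block \eqref{Eq:ErythropoiesisRenewalEquation}, where I would again use triangularity. In the autonomous homeostatic (no-dosing) regime $A_1$ decouples with eigenvalue $-k_a<0$; the central-compartment equation depends only on $A_2$, and since the Michaelis--Menten elimination term $\frac{V_{max}A_2/V_1}{K_M+A_2/V_1}$ is strictly increasing in $A_2$ its contribution to the diagonal is strictly negative; the peripheral equation contributes $-k_{32}<0$; and the $P_1$ equation contributes $-\frac{S_{max}C^*}{SC_{50}+C^*}\frac{N_P}{T_P}<0$, with its dependence on $A_2$ (only through $C=BSL+A_2/V_1$) appearing strictly below the diagonal. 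Thus the driving block is lower triangular with strictly negative diagonal, hence stable, and the union argument shows every eigenvalue of the full Jacobian has negative real part, giving local asymptotic stability.

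The main obstacle is bookkeeping rather than analysis: one must verify that the chosen variable ordering genuinely renders the full Jacobian (block) triangular, i.e.\ confirm that there is no hidden feedback loop, and in particular that the $V_e(t)$-dependence of the EPO-sensitive progenitors enters only through the upstream variable $A_2$ and never closes a cycle. One must also restrict to the autonomous homeostatic branch of the piecewise equations \eqref{Eq:PerezA1Equation}--\eqref{Eq:PerezA2Equation}, after exogenous dosing has ceased, so that the equilibrium is a genuine fixed point of a smooth autonomous system and linearisation is legitimate. The signs of the Michaelis--Menten and Hill-type derivatives then need checking, but these are routine once the triangular structure is in place.
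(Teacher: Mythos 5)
Your proposal is correct and takes essentially the same route as the paper: exploit the feed-forward (cascade) structure so that the Jacobian at the homeostatic equilibrium is (block) lower triangular with strictly negative diagonal entries, the driving $4\times 4$ block being exactly the matrix $\mathbb{J}(x^*)$ that the paper writes down for the reduced system~\eqref{Eq:ErythropoiesisRenewalEquation}. The only difference is one of completeness rather than method: you linearise the full system~\eqref{Eq:PerezModel} and dispatch the transit-chain blocks explicitly (diagonal entries $-V_e^*$, $-N_P/T_P$, $-N_R/T_R$), whereas the paper linearises only the reduced four-dimensional system and leaves the downstream compartments to the preceding renewal-equation reduction, so your version is, if anything, slightly more self-contained as a statement about the original ODE model.
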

\begin{proof}
The linearisation matrix of equation~\eqref{Eq:ErythropoiesisRenewalEquation} about the equilibrium $x^* = (A_1^*,A_2^*,A_3^*,P_1^*)$ is
\begin{equation*}
\mathbb{J}(x^*) = \left[
\begin{array}{cccc}
-k_a & 0 & 0 & 0 \\
0 & \frac{-V_{max}/V_1 k_M}{(k_M+A_2^*/V_1)^2} & 0 & 0 \\
0 & k_{23} & -k_{32} & 0 \\
0 & \frac{1}{V_1}\frac{S_{max}C^*}{(SC_{50}+C^*)^2} & 0 & -\frac{S_{max}C^*}{SC_{50}+C^*}\frac{N_P}{T_P}\\
\end{array}
\right].
\end{equation*}
The matrix $\mathbb{J}(x^*)$ is lower triangular with strictly negative diagonal entries, so the eigenvalues are strictly negative and the equilibrium is locally asymptotically stable.
\end{proof}

This example illustrates how Remark \ref{Remark:ODEtoDDERecipe} can be adapted to include a series of concatenated ageing processes. In the age structured PDE interpretation, each ageing process corresponds to a unique random variable modelling the transition between distinct stages. As we do not \textit{a priori} expect the transition ages to be independent, interpreting the resulting ageing processes requires some care. The final renewal equation~\eqref{Eq:ErythropoiesisRenewalEquation} includes a joint multivariate distribution representing the concatenation of distinct ageing processes.

%Unlike univariate distributions, the link between density function and multivariate distribution is more complicated. To model the random variable representing the total time between entering the erythrocyte precursor pool and maturing to a red blood cell, we must use copulas to describe the interdependence of time in the precursor and reticulocyte pools.

Further, \citet{Perez-Ruixo2008} did not show that the homeostatic equilibrium is locally asymptotically stable. For the ODE system \eqref{Eq:PerezModel}, the Jacobian would be a $(3+N_P+N_R) \times (3+N_P+N_R) $ matrix with a degree $(3+N_P+N_R)$ characteristic polynominal. In general, analytically finding the roots of a large degree polynominal is difficult. Hence, while the ODE \eqref{Eq:PerezModel} is obviously finite dimensional, it is analytically intractable.

Conversely, the equivalent renewal equation~\eqref{Eq:ErythropoiesisRenewalEquation} is simple to analyse and a similar argument to Proposition~\ref{Prop:NonNegativitiy} shows that solutions of the renewal equation~\eqref{Eq:ErythropoiesisRenewalEquation} evolving from non-negative initial conditions remain non-negative.  The ``A'', ``C'' and ``D'' models can be modelled as renewal equations through a simple application of the classical linear chain technique and the technique shown here.

% The pharmacokinetics of exogenous EPO are unchanged from the ODE transit formulation of the model. We note that equation~\eqref{Eq:ReducedPerezModel} only contains the dynamics for EPO given by equations~\eqref{Eq:PerezA1Equation}, \eqref{Eq:PerezA2Equation}, and \eqref{Eq:PerezA3Equation}, and the circulating reticulocytes. Measurements of circulating EPO and reticulocytes were used to parameterise the model \citep{Perez-Ruixo2008}. Therefore, the reduction of the ODE model (equation~\eqref{Eq:PerezModel}) to the distributed DDE (equation~\eqref{Eq:ReducedPerezModel}) allows for parameterization using only quantities measured in data.

\subsection{Roskos's Model of Granulocyte Production}

\citet{Roskos2006} modelled the impact of exogenous administration of granulocyte colony stimulating factor (G-CSF) on neutrophil proliferation and maturation speed. G-CSF is a proinflammatory cytokine that binds to G-CSF specific receptors on mature neutrophil cells and controls neutrophil kinetics through a negative feedback loop \citep{Roberts2005,Shochat2007}. G-CSF governs neutrophil production by increasing the effective proliferation of neutrophil precursors, reducing the maturation time of non-mitotic neutrophil precursors, and increasing release of neutrophil cells from the bone marrow into the blood. The dynamics of neutrophil production have been well-studied from both a mathematical and a pharmacometric point of view \citep{Craig2016,deSouza2017,Quartino2014}. These models have used different techniques to incorporate the delays intrinsic to the system, such as discrete DDEs or transit compartment ODEs.  \citet{Roskos2006} model distinct stages of granulocyte production such as the bone marrow concentrations of metamyelocytes, $M(t)$; band cells, $B(t)$; and segmented neutrophil cells, $S(t)$. The ageing and maturation processes for each of these cell types is modelled through a series of three transit chains with $N_M,N_B$ and $N_S$ compartments, respectively. Moreover, band and segmented neutrophil cells can be shunted into circulation following the administration of G-CSF. We denote the metamyelocyte, band and segmented neutrophil cell shunting rates as $\mu_m(t),\mu_b(t)$ and $\mu_s(t)$

Administration of G-CSF is modelled in a similar way to the EPO model of Section~\ref{Sec:PerezModel} using a first order delayed absorption model. However, \citet{Roskos2006} do not give the differential equations for exogenous administration of G-CSF other than to state that the clearance of G-CSF includes neutrophil receptor mediated clearance through the term
\begin{equation*}
CL_N/F = \frac{k_{cat}/F (B_p(t)+S_p(t))}{K_M+C(t)},
\end{equation*}
where $B_p(t)$ and $S_p(t)$ are the number of circulating band and segmented neutrophil cells, respectively. Due to the feedback between the circulating neutrophil precursors and the cytokine $C(t)$, we are unable to completely reduce the Roskos model to a renewal type equation as was done in Section~\ref{Sec:PerezModel}.

The Roskos model for granulocyte production is
\begin{align*}
%\begin{aligned}
\TimeDeriv M_1(t) & = S_0+\frac{E_{mit}C(t)}{EC_{50}+C(t)} - \frac{N_M}{\tau_{meta}\left( 1-\frac{f_{mmt}C(t)}{EC_{50}+C(t)}\right) }M_1(t) \\
\TimeDeriv M_i(t) &  =  \frac{N_M}{\tau_{meta}\left( 1-\frac{f_{mmt}C(t)}{EC_{50}+C(t) } \right) }\left( M_{i-1}(t)-M_{i}(t) \right) \quad \textrm{for} \quad i =2,...,N_M  \\[0.2cm]
\TimeDeriv B_1(t) & = \frac{N_M}{\tau_{meta}\left( 1-\frac{f_{mmt}C(t)}{EC_{50}+C(t)} \right)}M_{N_M}(t) - \frac{N_B}{\tau_{band}\left( 1-\frac{f_{mmt}C(t)}{EC_{50}+C(t)} \right) } B_1(t) \\[0.2cm]
& \qquad {} - \frac{E_{band}C(t)}{EC_{50}+C(t)}B_1(t) \\[0.2cm]
\TimeDeriv B_i(t) & = \frac{N_B}{\tau_{band}\left( 1-\frac{f_{mmt}C(t)}{EC_{50}+C(t)} \right) }\left[ B_{i-1}(t)-B_i(t)\right]  -  \frac{E_{band}C(t)}{EC_{50}+C(t)}B_i(t); \quad i = 2,...N_B \\[0.2cm]
\TimeDeriv B_p(t) & = \displaystyle \sum_{i=1}^{N_B}  \frac{E_{band}C(t)}{EC_{50}+C(t)}B_i(t)- (k_{\lambda}+ k_{bpmat})B_p(t) \\
\TimeDeriv S_1(t) & = \frac{N_B}{\tau_{band}\left( 1-\frac{f_{mmt}C(t)}{EC_{50}+C(t)} \right)}B_{N_B}(t) - \left( \frac{N_S}{\tau_{seg}\left( 1-\frac{f_{mmt}C(t)}{EC_{50}+C(t)} \right) } + \frac{E_{seg}C(t)}{EC_{50}+C(t)}\right) S_1(t)\\[0.2cm]
\TimeDeriv S_i(t) & = \frac{N_S}{\tau_{seg}\left( 1-\frac{f_{mmt}C(t)}{EC_{50}+C(t)} \right) } \left[ S_{i-1}(t) - S_i(t) \right]  - \frac{E_{seg}C(t)}{EC_{50}+C(t)}S_i(t); \quad i = 2,...,N_S. \\
\TimeDeriv S_p(t) & = \displaystyle \sum_{i=1}^{N_S}  \frac{E_{band}C(t)}{EC_{50}+C(t)}S_i(t)- (k_{\lambda}+ k_{bpmat})S_p(t),
%\end{aligned}
%\label{Eq:RoskosModel}
\end{align*}
and is an example of a transit compartment model with variable ageing speed and linear clearance. The linear clearance terms are Hill type functions with a maximal clearance rate $E_{j}$ given by
\begin{equation*}
\mu_j(t) =  \frac{E_{j}C(t)}{EC_{50}+C(t)}.
\end{equation*}
Including these linear clearance in a transit compartment model is uncommon, but allows for the direct modelling of G-CSF mediated shunting of immature cells into circulation.

By converting the model into a distributed DDE, we underline the link between clearance of cells in a transit compartment to the exponential decay present in the distributed DDE. Once again, we will proceed by identifying the ingredients discussed in Remark~\ref{Remark:ODEtoDDERecipe}.

As in Section~\ref{Sec:PerezModel}, the most immature metamyelocytes, $M_1(t)$, are produced from the earlier progenitors at a constant baseline rate $S_0$ with the G-CSF dependent recruitment rate
\begin{equation*}
\frac{\beta_m(t)}{V_m(t)} = S_0 + \frac{E_{mit}C(t)}{EC_{50}+C(t)}.
\end{equation*}
Metamyelocytes progress through maturation at a G-CSF dependent rate
\begin{equation*}
V_m(t) = \frac{N_M}{\tau_{meta}\left( 1-\frac{f_{mmt}C(t)}{EC_{50}+C(t)}\right) }.
\end{equation*}
Metamyelocytes are not shunted into circulation following the administration of G-CSF, so $\mu_m(t) =0$. Therefore, the metamylocyte transit compartment model can be reduced to a distributed DDE using Remark~\ref{Remark:ODEtoDDERecipe} in an identical procedure to the P\'{e}rez-Ruixo model in Section~\ref{Sec:PerezModel}. The most mature metamyelocyte population is given by
\begin{equation}
M_{N_M}(t) = \int_{-\infty}^t  \frac{\beta_m(t)}{V_m(t)}g_{V_m^*}^{N_M}\left[\int_{\phi}^t \hat{V}_m(s) \d s\right]\d \phi.
\label{Eq:MetamyelocyteProgenitorExpression}
\end{equation}
Immature neutrophil band cells, $B_1(t)$, are created at the birth rate
\begin{equation*}
\displaystyle \frac{\beta_b(t)}{\hat{V}_b(t)} = \frac{N_M}{\tau_{meta}\left( 1-\frac{f_{mmt}C(t)}{EC_{50}+C(t)} \right)}M_{N_M}(t).
\end{equation*}
These band cells progress through the maturation compartments at the G-CSF dependent ageing rate
\begin{equation*}
V_b(t) = \frac{N_B}{\tau_{band}\left( 1-\frac{f_{mmt}C(t)}{EC_{50}+C(t)} \right) } \quad \textrm{with} \quad  V_b^* = \frac{N_B}{\tau_{band}\left( 1-\frac{f_{mmt}C^*}{EC_{50}+C^*} \right) },
\end{equation*}
so the scaled ageing rate is $\hat{V}_b(t) = V_b(t)/V_b^*$. Inspecting the remaining terms in the equation for $B_1(t)$ gives
\begin{equation*}
\mu_b(t) =  \frac{E_{band}C(t)}{EC_{50}+C(t)}.
\end{equation*}
Therefore, using Remark \ref{Remark:ODEtoDDERecipe}, we find that the $i$-th band compartment satisfies
\begin{equation}
\begin{aligned}
B_{i}(t) & = \hspace{-0.2cm}  \int_{-\infty}^t \frac{\beta_b(\phi)}{V_b(\phi)} \exp\left[-\int_{\phi}^t \mu_b(s) \d s\right]g_{V_B^*}^{i}\left(\int_{\phi}^t \hat{V}_b(s)\d s\right) \d \phi
\end{aligned}
\label{Eq:BandCellExpression}
\end{equation}
 for $i = 1,2,...N_B$.

% The state dependent distributed DDE for the number of circulating band cells, $B_p(t)$ is given by
%\begin{equation}
%\TimeDeriv B_p(t) = \displaystyle \sum_{i=1}^{N_B}  \frac{E_{band}C(t)}{EC_{50}+C(t)}B_i(t)- (k_{\lambda}+ k_{bpmat})B_p(t),
%\label{Eq:CirculatingBandDE}
%\end{equation}
%where $B_i(t)$ is given by \eqref{Eq:BandCellExpression}.

Mature band cells, given by \eqref{Eq:BandCellExpression} with $i=N_B$, transition into the first segmented neutrophil cell compartment $S_1(t)$ with creation rate
\begin{equation*}
\frac{\beta_s(t)}{\hat{V}_s(t)} = \frac{N_B}{\tau_{band}\left( 1-\frac{f_{mmt}C(t)}{EC_{50}+C(t)} \right)}B_{N_B}(t) = V_b(t)B_{N_B}(t).
\end{equation*}
These cells transit through the segmented neutrophil population with G-CSF dependent ageing ($V_s(t)$) and clearance ($\mu_s(t)$) rates
\begin{equation*}
V_s(t) = \frac{N_S}{\tau_{seg}\left( 1-\frac{f_{mmt}C(t)}{EC_{50}+C(t)} \right) } \quad  \textrm{and} \quad \mu_s(t) = \frac{E_{seg}C(t)}{EC_{50}+C(t)}.
\end{equation*}
Therefore, we have identified all the ingredients in Remark~\ref{Remark:ODEtoDDERecipe} for the segmented neutrophil precursors, $S(t)$. The first segmented neutrophil cell compartment satisfies
\begin{equation*}
\begin{aligned}
\TimeDeriv S_1(t) & = \overbrace{ V_b(t) \int_{-\infty}^t \frac{\beta_b(\phi)}{V_b(\phi)} \exp\left[-\int_{\phi}^t \mu_b(s) \d s\right]g_{V_B^*}^{N_b}\left(\int_{\phi}^t \hat{V}_b(s)\d s\right) \d \phi}^{\beta_s(t)/\hat{V}_s(t)} \\
& \qquad {} -V_s(t) S_1(t) - \mu_s(t) S_1(t).
\end{aligned}
\end{equation*}
Therefore, it is possible to replace the transit compartment system of ODEs for $S_i(t)$ using Remark~\ref{Remark:ODEtoDDERecipe} to find
\begin{align}\label{Eq:SegmentedCompartmentExpression}
\notag
S_i(t) & = \int_{-\infty}^t \overbrace{ V_b(\theta)\left[ \int_{-\infty}^{\theta} \frac{\beta_b(\phi)}{V_b(\phi)} \exp\left[-\int_{\phi}^{\theta} \mu_b(s) \d s\right] g_{V_B^*}^{N_b}\left(\int_{\phi}^{\theta} \hat{V}_b(s)\d s\right) \d \phi\right]}^{\beta_s(\theta)/\hat{V}_s(\theta)}  \\
& \qquad {} \times \exp\left[- \int_{\theta}^{t}  \mu_s(x)\d x \right]g_{V_s^*}^{i}\left(\int_{\theta}^{t} \hat{V}_s(s)\d s\right)  \d \theta \qquad \textrm{for} \quad i = 1,2,...,N_s.
\end{align}

The initial value problem studied by \citet{Roskos2006} was equipped with initial conditions for the cytokine equations as well as the $N_M+N_S+N_B+2$ compartments. Since $\mu \neq 0$ in general, to create an equivalent renewal type equation, we use the same initial conditions as \citet{Roskos2006} for the cytokine differential equations and follow \citet{Cassidy2018} to construct appropriate history functions for $M(t),B(t)$ and $S(t)$.

Therefore, we can reduce the ODE model of granulopoiesis to a renewal-type equation with unchanged cytokine dynamics from \citet{Roskos2006} using the resulting DDEs for $B_p(t)$ and $S_p(t)$. The resulting renewal equation is given by the equations describing the cytokine dynamics and the system of distributed DDEs
\begin{equation*}
\begin{aligned}
\TimeDeriv B_p(t) & = \displaystyle \sum_{i=1}^{N_B}  \frac{E_{band}C(t)}{EC_{50}+C(t)}B_i(t)- (k_{\lambda}+ k_{bpmat})B_p(t) \\
\TimeDeriv S_p(t) & = \displaystyle \sum_{i=1}^{N_S}  \frac{E_{band}C(t)}{EC_{50}+C(t)}S_i(t)- (k_{\lambda}+ k_{bpmat})S_p(t),
\end{aligned}
\end{equation*}
where $B_i(t)$ and $S_i(t)$ are given by \eqref{Eq:BandCellExpression} and \eqref{Eq:SegmentedCompartmentExpression}, respectively.

In this example, we have shown how to concatenate multiple ageing processes with distinct ageing velocities, as well as how to include the loss of cells throughout the ageing process. Once again, we can use a similar argument to Proposition~\ref{Prop:NonNegativitiy} to ensure that the solutions evolving from non-negative initial data remain non-negative.

\section{Discussion}\label{Sec:Discussion}

In this work, we have shown how to reduce age structured PDEs to possibly state-dependent DDEs. Our derivation shows how the correction factor discussed in Section~\ref{Sec:CorrectionFactor} results naturally from considering the hazard rate at which cells exit maturation, and generalises the derivation of \citet{Craig2016} to the non-deterministic case.

In Section~\ref{Sec:AnalysisofDDE}, we analysed the general distributed DDE that arises from the age structured population model. We showed, in Proposition~\ref{Prop:NonNegativitiy}, that populations evolving from non-negative initial conditions remain non-negative, regardless of the density $K_A(t)$. By linearising the distributed DDE, we showed, in Proposition~\ref{Prop:StabilityProposition}, that stability analysis of the general DDE is analytically tractable. We characterized the stability of a generic equilibrium solution as a function of the linearisation of the growth function $F(x^*,\bar{x}^*)$.

Next, we considered the state-dependent DDE in the case of the degenerate, uniform and gamma distribution. Choosing the degenerate distribution leads to the familiar state-dependent discrete DDE, while uniformly distributed DDEs are reducible to discrete DDEs with two state dependent delays. Finally, in the case of gamma distributed DDEs, we explicitly related transit compartment models that include variable transit rates with gamma distributed DDEs in Theorem~\ref{Theorem:FiniteDimensionRepresentation}. As shown by \citet{deSouza2017}, it can be simpler to analyse stability of equilibria and positivity of solutions of a distributed DDE than the corresponding ODE. However, the ODE models may be simpler to simulate numerically. The equivalence between the differential equations allows for the resulting model to be analysed in the more convenient setting.

By the means of two examples, we showed how to express transit compartment models as an equivalent DDE or renewal equation. First, we showed how to incorporate a variable transit rate into a distributed DDE using a simple application of Theorem~\ref{Theorem:FiniteDimensionRepresentation}. Next, we demonstrated that our method is capable of including multiple distinct ageing processes in the form of a multivariate distributed DDE. Lastly, we showed how a linear clearance term in each of the transit compartments can be included in the equivalent DDE model. Analysis of the renewal equation was shown to be simpler than the corresponding ODE system, and we were able to easily characterise the stability of the homeostatic equilibria.

This work emphasizes the link between transit compartment ODEs and delay differential equations. While this link has been known for over 50 years, we explicitly establish it for compartment models with variable transit rates.  We demonstrated that these transit compartment models are equivalent to state dependent distributed DDEs. The equivalence between easy-to-simulate ODE models and the simpler to analyse distributed DDEs allows modellers to use the formulation that is most convenient for their purposes. Consequently, the framework developed in this article allows for researchers to incorporate both external control of ageing rates and heterogeneous, non-deterministic maturation age into models of physiological maturation processes.

\section*{Acknowledgments}
TC would like to thank the Natural Sciences and Engineering Research Council of Canada (NSERC) for funding through the PGS-D program and the Alberta Government for funding through the Sir James Lougheed award of distinction. MC and ARH are grateful for funding through the NSERC Discovery Grant program.

%\clearpage
%\footnotesize{\bibliography{C:/Users/tcassidy.MATH8/Documents/My_bib/library} }
%\footnotesize{ \bibliography{C:/Users/ttcas/Documents/My_bib/library} }

\end{document}